\begin{document}

\title{A Modal Characterization of Alternating Approximate Bisimilarity\thanks{This work received financial support of the National Natural Science of China(No. 60973045), Fok Ying-Tung Education Foundation, the NSF of Jiangsu Province (No.BK2012473) and the PAPD of JiangSu Higher Education Institutions (Audit Science and Technology Preliminary Research Project (No.YSXKKT27))
.}
}


\author{Jinjin Zhang         \and
        Zhaohui Zhu 
}


\institute{Jinjin Zhang \at
School of Information Science, Nanjing Audit University, Nanjing, 211815 China\\
Tel. +86-13770658202\\
\email{jinjinzhang@nau.edu.cn}
\and
Zhaohui Zhu: Corresponding author \at
Department
of Computer Science, Nanjing University of Aeronautics and Astronautics, Nanjing,  210016 China\\
\email{bnj4892856@jlonline.com}
}

\date{Received: date / Accepted: date}

\maketitle

\begin{abstract}
Recently, alternating transition systems are adopted to describe control systems with disturbances and their finite abstract systems. In order to capture the equivalence relation between these systems, a notion of alternating approximate bisimilarity is introduced. This paper aims to establish a modal characterization for alternating approximate bisimilarity.
Moreover, based on this result, we provide a link between  specifications satisfied by the samples of control systems with disturbances and their finite abstractions.
\keywords{Alternating approximate bisimilarity \and modal characterization \and control systems with disturbances \and finite abstractions \and temporal logical specification}
\end{abstract}

\section{Introduction}\label{Int}
The notion of bisimilarity is one of the central concepts in process algebra.
Roughly speaking, two states are bisimilar if and only if they can perform same actions to reach bisimilar states.
In general, two bisimilar processes are always considered to be identical.
In recent years, the notion of bisimilarity has been adopted in the area of control theory to capture the equivalence between control systems and their finite abstraction~\cite{tabuada2003finite}\cite{tabuada2003discrete}\cite{tabuada2006linear}.

However, when the states or actions of labeled transition systems  are associated with quantitative data, the notion of bisimilarity seems not be very suitable for describing the equivalence in such situation.
For example, in real time systems, there is often a little difference between time delays.
If we use the usual notion of bisimilarity (for instance, timed bisimilarity~\cite{larsen1994time}\cite{wang1990real}) to capture the equivalence of states in such systems, time delays can match only if they are identical.
Such exact matching may be unrealistic.
On the other hand, in control theory, it has been pointed out that the notion of bisimilarity is so rigorous  that it is often hard to construct finite abstractions which are bisimilar to the given control systems~\cite{girard2007hybrid}\cite{pola2008approximately}.

To overcome these defects, a number of theories are provided to describe approximate behavioral equivalence~\cite{van2005metricbehavioural}\cite{van2008approximating}\cite{van2005behavioural}\cite{desharnais1999labelled}\cite{desharnais2004metrics}\cite{giaca1990algebraic}\cite{girard2007approximation}
\cite{pola2008symbolic}\cite{pola2009symbolic}\cite{ying2002bisimulation}.
In these work, two different approaches have been adopted.

One approach is to introduce notions of approximate bisimilarity.
In such category, Giacalone et al. are probably first to present the notion of approximate bisimilarity, and provide $\varepsilon$-bisimilarity  for probabilistic transition systems~\cite{giaca1990algebraic}.
In the framework of metric labeled transition systems, Ying provides the notion of $\lambda$-bisimilarity~\cite{ying2002bisimulation}.
This notion has been adopted to describe the equivalence between processes in time-CCS and time-CSP~\cite{ying2002bisimulation}, the equivalence and reliability of processes in pi-calculus with noise~\cite{ying2005pi}, and  the equivalence between quantum processes in qCCS~\cite{ying2009algebra}.

In recent years, some notions of approximate bisimilarity are introduced in control theory.
In the framework of LTS with observations and metrics over observations, Girard and Pappas introduce $\delta$-approximate bisimilarity~\cite{girard2007approximation}.
Pola and Tabuada adopt this notion to capture the equivalence between control systems without disturbances and their finite abstractions~\cite{pola2008approximately}.
They also provide the notion of alternating approximate bisimilarity in alternating transition systems to describe approximate equivalence between control systems with disturbances and their finite abstractions~\cite{pola2008symbolic}\cite{pola2009symbolic}.
The notions of approximate bisimilarity play an important role in the analysis and design of control systems (for example, see~\cite{camara2011synthesis}\cite{girard2009hierarchical}\cite{tabuada2008approximate}).
Girard and Pappas give an overview about the related work and point out that the notions of approximate bisimilarity provide a bridge between control theory and computer science~\cite{girard2011approximate}.

Another approach is based on distance functions over processes (or states, systems).
For a variety  of transition systems, distance functions have been introduced via distinct approaches (e.g., modal logic, fixed point, and coalgebra).
For example, for probabilistic transition systems, Desharnais et al.~\cite{desharnais1999labelled}\cite{desharnais2004metrics} and Breugel et al.~\cite{van2008approximating}\cite{van2005behavioural} adopt these methods to define metrics over processes and establish the relationship between these metrics.
Recently, Zhou and Ying define a metric over probabilistic transition systems in terms of so-called ``smallest'' logical formula that distinguishes them~\cite{zhou2012approximating}.
For labeled transition systems accompanied with metric, van Breugel provides pseudometrics over states through  these three methods and shows that these pseudometrics coincide~\cite{van2005metricbehavioural}.
To describe the equivalence between metric transition systems, de Alfaro et al. introduce linear distances and branching distances~\cite{de2009linear}.

The relationship between these two approaches has been explored in the literature. For example, Giacalone et al.
introduce a pseudometric over probabilistic transition systems in terms of $\varepsilon$-bisimilarity~\cite{giaca1990algebraic}.
Van Breugel presents a conjecture which concerns the relationship between his behavioural pseudometric and Ying's $\lambda$-bisimilarity~\cite{van2005metricbehavioural}. Recently, a negative answer to this conjecture is given~\cite{zhang2008behavioural}.
In the framework of LTS with observations, Girard and Pappas  characterize the branching distance in terms of $\delta$-approximate bisimulation with the assumption that the discount factor $\alpha=1$~\cite{girard2007approximation}.
This result has been generalized to general case and the branching distance with arbitrary discount factor is characterized in terms of $(\eta,\alpha)$-bisimilarity~\cite{zhang2008characterize}.

As well known, bisimilarity can be characterized as a fixed point~\cite{park1981concurrency}, via a modal logic~\cite{hennessy1985algebraic} and by way of coalgebra~\cite{aczel1989final}.
A modal characterization of bisimilarity is provided by Hennessy and Milner~\cite{hennessy1985algebraic}.
They demonstrate that, in the framework of LTS, bisimilarity coincides with logical equivalence w.r.t Hennessy-Milner logic (HML, for short), that is, two states in LTS are bisimilar if and only if they satisfy the same formulae of HML.
Inspired by this result, different modal characterizations are established for a lot of varieties of bisimilarity in the above style.
For instance, Alur et al. characterize alternating bisimilarity in terms of alternating-time temporal logic (ATL, for short)~\cite{AlurCONCUr98}.
Ying provides a logical characterization of $\lambda$-bisimilarity with the assumption that the metric is ultra-metric or $\lambda=0$.
However, without such assumption, $\lambda$-bisimilarity can not be characterized in the style of Hennessy-Milner theorem.
Its logical characterization associated with arbitrary metric is established in a new style~\cite{zhang2007modal}.

The logical characterizations of bisimilarity  play important roles in the formal analysis and design of control systems.
They guarantee that control systems share the same logical properties with their finite abstractions which are bisimilar to these control systems.
In such situation, the analysis and design of control systems can be equivalently performed on their finite abstraction, which considerably reduces the complexity of the analysis and design of control systems~\cite{alur2000discrete,tabuada2006linear}.

This paper aims to establish a logical characterization of alternating approximate bisimilarity.
Furthermore, based on this result, for control systems with disturbances mentioned in~\cite{pola2009symbolic}, we illustrate a relationship between linear temporal logical  specifications satisfied by their samples under control and by their finite abstractions under control, respectively.
Roughly speaking, this paper demonstrates that  if the sample of a control system with disturbances and its finite abstraction are alternating approximate bisimilar and the latter satisfies a specification  under control, then the former may satisfy a ``looser" specification under control.
In particular, the transformation from a given specification to a looser one is provided.

The rest of this  paper is organized as follows.
We recall related definitions and results in Section~\ref{sec:pre}.
In Section~\ref{sec:relation}, we provide a variety of ATL and two relations between the formulas of this logical language, which play central roles in this paper.
Section~\ref{sec:characterization} establishes a modal characterization of alternating approximate bisimilarity.
In  Section~\ref{sec:finite abstraction}, we illustrate a relationship between temporal logical specifications satisfied by the samples of control systems with disturbances and by their finite abstractions under control.
Finally, we  conclude the paper in Section~\ref{sec:conclusion}.
\section{Preliminaries}
\label{sec:pre}
This section will recall some notions and results about alternating transition systems, alternating bisimilarity and alternating approximate bisimilarity from \cite{AlurCONCUr98}\cite{pola2009symbolic}.

Before doing so, we introduce some useful notations. The symbol $\mathbb{N},\mathbb{R},\mathbb{R}_+$ and $\mathbb{R}_+^0$ denote the set of positive integers, reals, positive reals and nonnegative reals, respectively.
For any set $A$, $A^{+}$ represents the set of all non-empty finite strings over $A$, and $A^{\omega}$ denotes the set of infinite strings over $A$.
We use $s_{A}$ and $\sigma_{A}$ to denote the elements of $A^{+}$ and $A^{\omega}$, respectively. If $A$ is known from the context, we will omit the subscripts in $s_{A}$ and $\sigma_{A}$.
For any $s\in A^{+}$, $s[i]$ and $s[end]$ mean the $i$-th element and the last element of $s$, respectively.
Given $i\leq j$, $s[i,j]$, $s[i,end]$ and $\sigma[i,\infty]$ represent $s[i]s[i+1]\cdots s[j]$, $s[i]s[i+1]\cdots s[end]$ and $\sigma[i]\sigma[i+1]\cdots$, respectively. As usual, $|s|$ means the length of $s$.

\subsection{Alternating transition systems}\label{subsec:pre_systems}
\begin{definition}\label{def:ATS}
An alternating transition system is a 5-tuple $(S,\mathbb{P},\Omega,\Pi,\hbar)$, where

$\bullet$ $S$ is a set of states;

$\bullet$ $\mathbb{P}$ is a set of observations;

$\bullet$ $\Omega$ is a finite set of agents;

$\bullet$ $\Pi:S\rightarrow\mathbb{P}$ is an observation function;

$\bullet$ $\hbar:S\times\Omega\rightarrow 2^{2^S}$ is a function satisfying that for any state $q\in S$, if each agent $i\in \Omega$ chooses a set $S_i\in\hbar(q,i)$, then the set $\bigcap_{i\in \Omega}S_i$ is a singleton. The function $\hbar$ is often said to be transition function.

If $\bigcup\hbar(q,i)$ is finite for each $q\in S$ and $i\in\Omega$, then we say $(S,\mathbb{P},\Omega,\Pi,\hbar)$ is finite branching.
If both the state set $S$ and the observation set $\mathbb{P}$ are finite, then $(S,\mathbb{P},\Omega,\Pi,\hbar)$ is said to be a finite alternating transition system.
\end{definition}

Intuitively, for each state $q$, an agent $i$ can choose a set $S_i\in \hbar(q,i)$ such that the state reached from $q$ must belong to $S_i$.
According to the above definition, it is clear that the successor state of state $q$ is determined when all agents make choices.

\begin{definition}\label{def:strategy}
Let $T=(S,\mathbb{P},\Omega,\Pi,\hbar)$ be an alternating transition system,  $i\in \Omega$ and $Ag\subseteq \Omega$. A function $f_i:S^+\rightarrow 2^S$ is said to be a strategy of $i$ iff $f_i(s)\in\hbar(s[end],i)$ for any $s\in S^+$.
A function $F_{A\!g}:S^+\rightarrow 2^S$ is said to be a strategy of $Ag$ iff there exist a family of strategies $f_i$ of $i$ $(i\in Ag$) such that $F_{A\!g}(s)=\bigcap_{i\in Ag}f_i(s)$ for any $s\in S^+$.
\end{definition}

In the following, we set $\hbar(q,Ag)\triangleq\{\bigcap_{i\in Ag}S_i:S_i\in\hbar(q,i) \textrm{ for each } i\in Ag\}$ for each state $q$ and agent set $Ag$. The  conclusion below is simple but useful.

\begin{lemma}\label{lem:stra_Ag}
Let $T=(S,\mathbb{P},\Omega,\Pi,\hbar)$ be an alternating transition system and $Ag\subseteq \Omega$. The function $F_{A\!g}:S^+\rightarrow 2^S$ is a strategy of $Ag$ if and only if $F_{A\!g}(s)\in \hbar(s[end],Ag)$ for any $s\in S^+$.
\end{lemma}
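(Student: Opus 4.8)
The plan is to unwind the two definitions involved (a strategy of $Ag$ versus membership in $\hbar(s[end],Ag)$) and prove each direction by a direct construction. The only genuine content sits in the backward direction, where I must manufacture witnessing individual strategies, and this reduces to a pointwise choice of a decomposition.

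First, for the forward implication, I would assume $F_{Ag}$ is a strategy of $Ag$ and apply Definition~\ref{def:strategy} to obtain strategies $f_i$ of each $i\in Ag$ with $F_{Ag}(s)=\bigcap_{i\in Ag}f_i(s)$ for every $s\in S^+$. Fixing an arbitrary $s$, each $f_i(s)$ lies in $\hbar(s[end],i)$ by the definition of a strategy of $i$, so $\{f_i(s)\}_{i\in Ag}$ is precisely a family of the kind used to generate the elements of $\hbar(s[end],Ag)$. Reading off the definition $\hbar(q,Ag)\triangleq\{\bigcap_{i\in Ag}S_i:S_i\in\hbar(q,i)\}$ then yields $F_{Ag}(s)=\bigcap_{i\in Ag}f_i(s)\in\hbar(s[end],Ag)$, and since $s$ was arbitrary this settles the forward direction.

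For the backward implication, I would assume $F_{Ag}(s)\in\hbar(s[end],Ag)$ for all $s\in S^+$ and build the witnessing strategies. Fixing $s$, the membership $F_{Ag}(s)\in\hbar(s[end],Ag)$ guarantees, by the definition of $\hbar(\cdot,Ag)$, the existence of sets $S_i^{s}\in\hbar(s[end],i)$, one for each $i\in Ag$, with $F_{Ag}(s)=\bigcap_{i\in Ag}S_i^{s}$. I would then set $f_i(s):=S_i^{s}$ for every $i\in Ag$ and every $s\in S^+$; by construction $f_i(s)\in\hbar(s[end],i)$, so each $f_i$ is a strategy of $i$, while $\bigcap_{i\in Ag}f_i(s)=F_{Ag}(s)$ holds for all $s$. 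This is exactly the requirement of Definition~\ref{def:strategy}, so $F_{Ag}$ is a strategy of $Ag$.

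The one point deserving care is that the decomposition of $F_{Ag}(s)$ as an intersection $\bigcap_{i\in Ag}S_i^{s}$ need not be unique and must be selected simultaneously for every $s\in S^+$; since $S^+$ is typically infinite, this is a simultaneous choice of witnesses and so relies (harmlessly) on the axiom of choice. I expect this to be the only subtlety of the argument: everything else is an immediate reading-off of the definitions, and the finiteness of $\Omega$, hence of $Ag$, ensures that each intersection $\bigcap_{i\in Ag}S_i^{s}$ is finite, so no additional care is needed there.
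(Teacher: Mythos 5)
Your proof is correct and follows essentially the same route as the paper: the forward direction by unfolding Definition~\ref{def:strategy}, and the backward direction by fixing, for each $s\in S^+$, a decomposition $F_{A\!g}(s)=\bigcap_{i\in Ag}Q_i$ with $Q_i\in\hbar(s[end],i)$ and defining $f_i(s)=Q_i$ pointwise, exactly as the paper does. Your added remark about the simultaneous (choice-dependent) selection of witnesses is a fair observation the paper leaves implicit; only your final aside is slightly off, since finiteness of $Ag$ makes the intersection one of finitely many \emph{sets}, not a finite set, but this plays no role in the argument.
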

\begin{proof}
(From Left to Right) Immediately follows from Definition~\ref{def:strategy} and the definition of $\hbar$.

(From Right to Left) Suppose that the function $F_{A\!g}:S^+\rightarrow 2^S$ satisfies that $F_{A\!g}(s)\in \hbar(s[end],Ag)$ for any $s\in S^+$. To show that this function is a strategy of $Ag$, construct a family of strategies $f_i(i\in Ag)$ as follows:

Let $s\in S^+$. Since $F_{A\!g}(s)\in \hbar(s[end],Ag)$, it follows from the definition of  $\hbar(s[end],Ag)$ that there exist $Q_i\in \hbar(s[end],i)$ for each $i\in Ag$ such that $F_{A\!g}(s)=\bigcap_{i\in Ag}Q_i$.
Then, fix these $Q_i$'s, and for each $i\in Ag$, we set $f_i(s)=Q_i$.

Clearly, by Definition~\ref{def:strategy}, for each $i\in Ag$, the function $f_i:S^+\rightarrow 2^S$ defined above is a strategy of $i$. On the other hand,  it is easy to check that for any $s\in S^+$, $F_{A\!g}(s)=\bigcap_{i\in Ag}f_i(s)$.
Therefore, it follows from Definition~\ref{def:strategy} that $F_{A\!g}$ is a strategy of $Ag$.
\qed\end{proof}

In general, the strategies are provided for some agents to enforce the outcomes of alternating transition systems to satisfy the given properties, such as reachability, safety and so on.
Formally, the outcomes of alternating transition systems under strategies are defined below.

\begin{definition}\label{def:outcome}
Let $T=(S,\mathbb{P},\Omega,\Pi,\hbar)$ be an alternating transition system, $q\in S$, $Ag\subseteq \Omega$ and let $F_{A\!g}:S^+\rightarrow 2^S$ be a strategy of $Ag$.
For each $n\in\mathbb{N}$,
\begin{equation*}
\begin{aligned} Out_{T}^n(q,F_{A\!g})\triangleq  \{s\in &S^+: s[1]=q, |s|=n \textrm{ and} \\ & \forall i<n \exists Q\in \hbar(s[i],\Omega-Ag)(F_{A\!g}(s[1,i])\cap Q=\{s[i+1]\})\}
\end{aligned}
\end{equation*}
and
\begin{equation*}
\begin{aligned} Out_{T}(q,F_{A\!g})\triangleq  \{\sigma\in &S^\omega: \sigma[1]=q \textrm{ and} \\ & \forall i\in\mathbb{N} \exists Q\in \hbar(\sigma[i],\Omega-Ag)(F_{A\!g}(\sigma[1,i])\cap Q=\{\sigma[i+1]\})\}.
\end{aligned}
\end{equation*}
\end{definition}

Intuitively, $F_{A\!g}$ is used to indicate a family  of choices of agent set $Ag$, while $Out_{T}^n(q,F_{A\!g})$ and $Out_{T}(q,F_{A\!g})$ consist of  finite and infinite traces starting from $q$  in which each step subjects to such choices.
We often omit the subscripts of $Out_{T}^n(q,F_{A\!g})$ and $Out_{T}(q,F_{A\!g})$  when the alternating transition system $T$ is clear from the context.

\begin{lemma}\label{lem:outcome+1}
For any $n\in \mathbb{N}$, the following conclusion holds:
\begin{equation*}
\begin{aligned} Out^{n+1}_T(q,F_{A\!g})=& \{s\in S^{n+1}: s[1,n]\in Out^n_T(q,F_{A\!g})\textrm{ and } \\ &F_{A\!g}(s[1,n])\cap Q=\{s[end]\}\textrm{ for some } Q\in\textrm{$\hbar$}(s[n],\Omega-Ag) \}.
\end{aligned}
\end{equation*}
\end{lemma}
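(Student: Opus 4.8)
The plan is to prove the claimed set equality by unfolding Definition~\ref{def:outcome} directly and establishing the two inclusions separately; no induction is needed, since the statement is merely a one-step unpacking of the recursive shape implicit in the definition. The key observation driving both directions is that membership of $s$ in $Out^{n+1}_T(q,F_{A\!g})$ is governed by a conjunction of constraints indexed by $i<n+1$, and this index range splits cleanly into the constraints for $i<n$ (which, as I will verify, describe exactly membership of the length-$n$ prefix $s[1,n]$ in $Out^n_T(q,F_{A\!g})$) together with the single remaining constraint at $i=n$ (which governs the final transition to $s[end]=s[n+1]$).

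For the left-to-right inclusion, I would take an arbitrary $s\in Out^{n+1}_T(q,F_{A\!g})$. By definition $s[1]=q$, $|s|=n+1$, and for every $i<n+1$ there is some $Q\in\hbar(s[i],\Omega-Ag)$ with $F_{A\!g}(s[1,i])\cap Q=\{s[i+1]\}$. Restricting attention to indices $i<n$ and using the elementary identities $s[1,n][1,i]=s[1,i]$ and $s[1,n][i+1]=s[i+1]$ (both valid since $i+1\leq n$), the prefix $s[1,n]$ is seen to satisfy precisely the defining conditions of $Out^n_T(q,F_{A\!g})$; instantiating the remaining constraint at $i=n$ then yields $Q\in\hbar(s[n],\Omega-Ag)$ with $F_{A\!g}(s[1,n])\cap Q=\{s[n+1]\}=\{s[end]\}$. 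This places $s$ in the right-hand set.

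The right-to-left inclusion reverses this reasoning. Given $s\in S^{n+1}$ with $s[1,n]\in Out^n_T(q,F_{A\!g})$ and a witness $Q\in\hbar(s[n],\Omega-Ag)$ for $F_{A\!g}(s[1,n])\cap Q=\{s[end]\}$, I would recover $s[1]=q$ and all the constraints for $i<n$ directly from membership of $s[1,n]$ in $Out^n_T(q,F_{A\!g})$ (again via the same prefix identities), and supply the constraint at $i=n$ from the given witness, noting $s[end]=s[n+1]$. Together with $|s|=n+1$, these are exactly the conditions defining $Out^{n+1}_T(q,F_{A\!g})$, so $s$ lies in the left-hand set.

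I do not expect any genuine difficulty here; the only point demanding care is the index bookkeeping. Specifically, one must correctly match $s[end]$ with $s[n+1]$ for a string of length $n+1$, confirm that the prefix and restriction operations commute as the identities $s[1,n][1,i]=s[1,i]$ and $s[1,n][i+1]=s[i+1]$ assert, and check that the quantifier range $i<n+1$ decomposes into $i<n$ together with the boundary case $i=n$ without gap or overlap. Once this bookkeeping is set out explicitly, both inclusions follow immediately.
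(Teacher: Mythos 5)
Your proof is correct: the paper itself dispenses with this lemma as ``Immediately,'' i.e., a direct unfolding of Definition~\ref{def:outcome}, which is exactly what you carry out by splitting the quantifier range $i<n+1$ into $i<n$ and the boundary case $i=n$. Your careful index bookkeeping is a fully written-out version of the same one-step argument the paper treats as obvious.
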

\begin{proof}
Immediately.
\qed\end{proof}
\subsection{Alternating bisimilarity and alternating approximate bisimilarity}\label{subsec:pre_bisimilar}
To capture the behavioral equivalence between  alternating transition systems associated with the same observation set and agent set, Alur et al. introduce the notion of alternating bisimilarity~\cite{AlurCONCUr98}.

\begin{definition}\label{def:bisim}\cite{AlurCONCUr98}
Let $T_i=(S_i,\mathbb{P},\Omega,\Pi_i,\hbar_i)$ be two alternating transition systems $(i=1,2)$ and $Ag\subseteq \Omega$. The binary relation $R\subseteq S_1\times S_2$ is said to be an $Ag$-alternating bisimulation if and only if for any $(q_1,q_2)\in R$,

(1) $\Pi_1(q_1)=\Pi_2(q_2)$;

(2) $\forall Q_1\in\hbar_1(q_1,Ag) \exists Q_2\in\hbar_2(q_2,Ag)\forall Q'_2\in\hbar_2(q_2,\Omega-Ag)$$\exists Q'_1\in\hbar_1(q_1,\Omega-Ag) (Q_1\cap Q'_1)\times(Q_2\cap Q'_2)\subseteq R$
\footnote{By Definition~\ref{def:ATS}, it is easy to see that both  $Q_1\cap Q'_1$ and $Q_2\cap Q'_2$ are singleton. Therefore, $(Q_1\cap Q'_1)\times(Q_2\cap Q'_2)$ is single.};

(3) $\forall Q_2\in\hbar_2(q_2,Ag) \exists Q_1\in\hbar_1(q_1,Ag)\forall Q'_1\in\hbar_1(q_1,\Omega-Ag)$$\exists Q'_2\in\hbar_2(q_2,\Omega-Ag) (Q_1\cap Q'_1)\times(Q_2\cap Q'_2)\subseteq R$.

For any $q_1\in S_1$ and $q_2\in S_2$, these two states are said to be $Ag$-alternating bisimilar, in symbols $q_1\sim_{A\!g} q_2$, if and only if there exists an $Ag$-alternating bisimulation $R$ such that $(q_1,q_2)\in R$.
In other words, $\sim_{A\!g}\triangleq\bigcup\{R\subseteq S_1\times S_2: R\textrm{ is an Ag-alternating bisimulation}\}$.
\end{definition}

It is easy to check that $\sim_{A\!g}$ is an equivalence relation and is the largest $Ag$-alternating bisimulation.
We leave it to the interested readers.
Alur et al. establish a modal characterization of alternating bisimilarity  in terms of alternating-time temporal logic (for short, ATL)~\cite{AlurCONCUr98}.
They show that two states  are $Ag$-alternating bisimilar if and only if they satisfy the same $Ag$-ATL formulas, where  $Ag$-ATL formulas are ATL-formulas in which all path quantifiers occurring are parameterized by $Ag$.

Recently, alternating transition systems associated with metric over observations are adopted as models of the samples of control systems with disturbances and their finite abstractions~\cite{pola2008symbolic}\cite{pola2009symbolic}. In these work, the notion of alternating approximate bisimilarity is used to capture approximate equivalence between systems.

\begin{definition}\label{def:approximate bisimi}\cite{pola2009symbolic}
Let $T_i=(S_i,\mathbb{P},\Omega,\Pi_i,\hbar_i)$ be two alternating transition systems $(i=1,2)$ and $Ag\subseteq \Omega$. Suppose that ${d}$ is a metric over $\mathbb{P}$ and $\varepsilon\in \mathbb{R}^{0}_+$. The binary relation $R\subseteq S_1\times S_2$ is said to be an $(Ag,\varepsilon)$-alternating approximate bisimulation if and only if for any $(q_1,q_2)\in R$,

(1) ${d}(\Pi_1(q_1),\Pi_2(q_2))\leq \varepsilon$;

(2) $\forall Q_1\in\hbar_1(q_1,Ag) \exists Q_2\in\hbar_2(q_2,Ag)\forall Q'_2\in\hbar_2(q_2,\Omega-Ag)$$\exists Q'_1\in\hbar_1(q_1,\Omega-Ag) (Q_1\cap Q'_1)\times(Q_2\cap Q'_2)\subseteq R$;

(3) $\forall Q_2\in\hbar_2(q_2,Ag) \exists Q_1\in\hbar_1(q_1,Ag)\forall Q'_1\in\hbar_1(q_1,\Omega-Ag)$$\exists Q'_2\in\hbar_2(q_2,\Omega-Ag) (Q_1\cap Q'_1)\times(Q_2\cap Q'_2)\subseteq R$.

Two states $q_1\in S_1$ and $q_2\in S_2$ are said to be $(Ag,\varepsilon)$-alternating approximate bisimilar, denoted by $q_1\sim^{\varepsilon}_{A\!g}q_2$, if and only if there exists an $(Ag,\varepsilon)$-alternating approximate bisimulation $R\subseteq S_1\times S_2$ such that $(q_1,q_2)\in R$.

$T_1$ and $T_2$ are said to be $(Ag,\varepsilon)$-alternating approximate bisimilar, in symbols  $T_1\sim^{\varepsilon}_{A\!g}T_2$, if and only if $\{q_1\in S_1:q_1\sim^{\varepsilon}_{A\!g}q_2 \textrm{ for some } q_2\in S_2\}=S_1$  and $\{q_2\in S_2:q_1\sim^{\varepsilon}_{A\!g}q_2\textrm{ for some } q_1\in S_1\}=S_2$.
\end{definition}

The following results reveal some simple properties of $(Ag,\varepsilon)$-alternating approximate bisimilarity.

\begin{lemma}\label{lem:prop of Ag-E}
(1) $\sim^{0}_{A\!g}=\sim_{A\!g}$ and $\sim^{0}_{A\!g}$ is an equivalence relation;

(2) for any $\varepsilon_1,\varepsilon_2\in \mathbb{R}^0_+$, if $\varepsilon_1\leq \varepsilon_2$ then $\sim^{\varepsilon_1}_{A\!g}\subseteq \sim^{\varepsilon_2}_{A\!g}$;

(3) for any $\varepsilon\in \mathbb{R}^0_+$, $\sim^{\varepsilon}_{A\!g}$ is the largest $(Ag,\varepsilon)$-alternating approximate  bisimulation.
\end{lemma}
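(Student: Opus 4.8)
The plan is to treat the three items in order, reducing each to the defining conditions in Definition~\ref{def:approximate bisimi} and exploiting the fact that only clause~(1) of that definition mentions $\varepsilon$, whereas clauses~(2) and~(3) are literally identical to those of Definition~\ref{def:bisim} and, crucially, are \emph{monotone} in the relation $R$.

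For item~(1), I would first observe that since $d$ is a metric, the inequality $d(\Pi_1(q_1),\Pi_2(q_2))\leq 0$ is equivalent to $d(\Pi_1(q_1),\Pi_2(q_2))=0$, which by the separation axiom of a metric is equivalent to $\Pi_1(q_1)=\Pi_2(q_2)$. Hence, when $\varepsilon=0$, clause~(1) of Definition~\ref{def:approximate bisimi} coincides with clause~(1) of Definition~\ref{def:bisim}; since clauses~(2) and~(3) are the same in both definitions, a binary relation is an $(Ag,0)$-alternating approximate bisimulation if and only if it is an $Ag$-alternating bisimulation. Passing to unions over all such relations then yields $\sim^{0}_{A\!g}=\sim_{A\!g}$. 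The assertion that $\sim^{0}_{A\!g}$ is an equivalence relation follows immediately from the already-noted fact that $\sim_{A\!g}$ is an equivalence relation.

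For item~(2), suppose $\varepsilon_1\leq\varepsilon_2$ and let $R$ be any $(Ag,\varepsilon_1)$-alternating approximate bisimulation. For each $(q_1,q_2)\in R$, clause~(1) gives $d(\Pi_1(q_1),\Pi_2(q_2))\leq\varepsilon_1\leq\varepsilon_2$, so clause~(1) holds with $\varepsilon_2$ as well, while clauses~(2) and~(3) are unchanged. Thus $R$ is simultaneously an $(Ag,\varepsilon_2)$-alternating approximate bisimulation, whence every witness for $q_1\sim^{\varepsilon_1}_{A\!g}q_2$ is a witness for $q_1\sim^{\varepsilon_2}_{A\!g}q_2$, giving $\sim^{\varepsilon_1}_{A\!g}\subseteq\sim^{\varepsilon_2}_{A\!g}$.

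For item~(3), by Definition~\ref{def:approximate bisimi} the relation $\sim^{\varepsilon}_{A\!g}$ is exactly the union of all $(Ag,\varepsilon)$-alternating approximate bisimulations, so it automatically contains every such bisimulation, and it remains only to verify that this union is itself an $(Ag,\varepsilon)$-alternating approximate bisimulation. I would take an arbitrary pair $(q_1,q_2)\in\sim^{\varepsilon}_{A\!g}$, fix a single witnessing bisimulation $R$ with $(q_1,q_2)\in R$, and check the three clauses for $\sim^{\varepsilon}_{A\!g}$ at this pair using the corresponding clauses for $R$. Clause~(1) transfers verbatim. The crux, and the only point requiring care, is clauses~(2) and~(3): each quantified condition terminates in ``$\cdots\subseteq R$'', and since $R\subseteq\sim^{\varepsilon}_{A\!g}$, the very choices of $Q_2,Q'_1$ (respectively $Q_1,Q'_2$) that place the relevant singleton product inside $R$ place it inside $\sim^{\varepsilon}_{A\!g}$ as well. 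This upward monotonicity of clauses~(2) and~(3) in their right-hand argument is precisely what makes the class of $(Ag,\varepsilon)$-alternating approximate bisimulations closed under arbitrary unions, and is the conceptual heart of the argument, even though the verification itself is routine.
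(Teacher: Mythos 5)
Your proposal is correct and follows essentially the same route as the paper's own proof: item (1) via the separation axiom of the metric identifying the two definitions at $\varepsilon=0$, item (2) via monotonicity of clause (1) in $\varepsilon$, and item (3) via closure of $(Ag,\varepsilon)$-alternating approximate bisimulations under union. You merely spell out the routine verifications (in particular the upward monotonicity of clauses (2) and (3) in $R$) that the paper leaves to the reader.
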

\begin{proof}
(1) Since ${d}$ is a metric, we have ${d}(\Pi_1(q_1), \Pi_2(q_2))\leq 0$ if and only if $\Pi_1(q_1)=\Pi_2(q_2)$  for any states $q_1$ and $q_2$. Thus it follows from Definition~\ref{def:bisim} and~\ref{def:approximate bisimi} that (1) holds.

(2) Immediately follows from Definition \ref{def:approximate bisimi}.

(3) Let $\varepsilon\in \mathbb{R}^0_+$. According to Definition \ref{def:approximate bisimi}, it is not difficult to check that $(Ag,\varepsilon)$-alternating approximate bisimulations are preserved under union.
Thus $\sim^{\varepsilon}_{A\!g}$ is the largest $(Ag,\varepsilon)$-alternating bisimulation.
\qed\end{proof}

As usual, $(Ag,\varepsilon)$-alternating approximate bisimilarity can be characterized in the forth-back style. Formally, we have

\begin{theorem}\label{th:approximate bisi}
$q_1\sim_{A\!g}^{\varepsilon} q_2$ if and only if the following hold:

(1) ${d}(\Pi_1(q_1),\Pi_2(q_2))\leq\varepsilon$;

(2) $\forall Q_1\in\hbar_1(q_1,Ag) \exists Q_2\in\hbar_2(q_2,Ag) \forall Q'_2\in\hbar_2(q_2,\Omega-Ag)$$\exists Q'_1\in\hbar_1(q_1,\Omega-Ag)((Q_1\cap Q'_1)\times (Q_2\cap Q'_2)\subseteq\sim^{\varepsilon}_{A\!g})$;

(3) $\forall Q_2\in\hbar_2(q_2,Ag) \exists Q_1\in\hbar_1(q_1,Ag) \forall Q'_1\in\hbar_1(q_1,\Omega-Ag)$$\exists Q'_2\in\hbar_2(q_2,\Omega-Ag)((Q_1\cap Q'_1)\times (Q_2\cap Q'_2)\subseteq\sim^{\varepsilon}_{A\!g})$.
\end{theorem}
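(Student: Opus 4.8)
The plan is to reduce both directions to the fact, recorded in Lemma~\ref{lem:prop of Ag-E}(3), that $\sim^{\varepsilon}_{A\!g}$ is itself the largest $(Ag,\varepsilon)$-alternating approximate bisimulation. The only auxiliary observation I need is that clauses (2) and (3) of Definition~\ref{def:approximate bisimi} are monotone in the relation that appears on their right-hand side: if a pair satisfies such a clause with $R$ and $R\subseteq R'$, then it satisfies the same clause with $R'$, simply because $(Q_1\cap Q'_1)\times(Q_2\cap Q'_2)\subseteq R\subseteq R'$. Note that the relation enters these clauses only through the final inclusion, so the quantifier prefix $\forall Q_1\exists Q_2\forall Q'_2\exists Q'_1$ is untouched when we enlarge $R$.

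For the forward direction, suppose $q_1\sim^{\varepsilon}_{A\!g}q_2$. Then there is an $(Ag,\varepsilon)$-alternating approximate bisimulation $R$ with $(q_1,q_2)\in R$. Clause (1) of Definition~\ref{def:approximate bisimi} for $R$ yields condition (1) of the theorem immediately. For conditions (2) and (3), I would invoke clauses (2) and (3) of Definition~\ref{def:approximate bisimi} for $R$, which give the required quantifier alternation with $(Q_1\cap Q'_1)\times(Q_2\cap Q'_2)\subseteq R$; since $R\subseteq\sim^{\varepsilon}_{A\!g}$ by the maximality in Lemma~\ref{lem:prop of Ag-E}(3), the monotonicity observation upgrades $\subseteq R$ to $\subseteq\sim^{\varepsilon}_{A\!g}$, as desired.

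For the backward direction, suppose $(q_1,q_2)$ satisfies conditions (1)--(3) of the theorem. I would set $R=\sim^{\varepsilon}_{A\!g}\cup\{(q_1,q_2)\}$ and check that $R$ is an $(Ag,\varepsilon)$-alternating approximate bisimulation. For a pair already lying in $\sim^{\varepsilon}_{A\!g}$, clauses (1)--(3) of Definition~\ref{def:approximate bisimi} hold with $\sim^{\varepsilon}_{A\!g}$ on the right (again by Lemma~\ref{lem:prop of Ag-E}(3)), and since $\sim^{\varepsilon}_{A\!g}\subseteq R$ the same clauses hold with $R$ by monotonicity. For the pair $(q_1,q_2)$ itself, the assumed conditions (1)--(3) are precisely clauses (1)--(3) with $\sim^{\varepsilon}_{A\!g}$ on the right, which again lift to $R$. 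Hence $R$ is such a bisimulation and contains $(q_1,q_2)$, so $q_1\sim^{\varepsilon}_{A\!g}q_2$ by Definition~\ref{def:approximate bisimi}.

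There is essentially no hard step: the whole argument is the standard ``bisimilarity is itself a bisimulation'' pattern adapted to the alternating and approximate setting, pivoting entirely on Lemma~\ref{lem:prop of Ag-E}(3). The only point that deserves care is checking that the nested quantifier structure of clauses (2) and (3) is transported verbatim when passing between $R$ and $\sim^{\varepsilon}_{A\!g}$; because the relation occurs only in the terminal inclusion, the monotonicity observation makes this transport immediate and no recomputation of the witnesses $Q_2$ and $Q'_1$ (respectively $Q_1$ and $Q'_2$) is required.
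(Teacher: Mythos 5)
Your proposal is correct and follows essentially the same route as the paper: both directions pivot on Lemma~\ref{lem:prop of Ag-E}(3), and your backward direction---adjoining the pair to $\sim^{\varepsilon}_{A\!g}$ and checking that the union is an $(Ag,\varepsilon)$-alternating approximate bisimulation via the monotonicity of clauses (2) and (3) in the relation---is exactly the paper's construction, the only cosmetic difference being that the paper adjoins \emph{all} pairs satisfying (1)--(3) rather than just $(q_1,q_2)$. Your explicit monotonicity observation simply spells out what the paper calls ``almost immediate.''
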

\begin{proof}
(From left to right)
Follows from Definition~\ref{def:approximate bisimi} and (3) in Lemma~\ref{lem:prop of Ag-E}.

(From right to left)
Let $R\triangleq\{(q_1,q_2):q_1\textrm{ and } q_2 \textrm{ satisfy (1)-(3)}\}\cup\sim_{A\!g}^{\varepsilon}$. It is almost immediate  to check that $R$ is an $(Ag,\varepsilon)$-alternating approximate bisimulation.
So by (3) in Lemma~\ref{lem:prop of Ag-E}, the conclusion holds.
\qed\end{proof}

It should be pointed out that $(Ag,\varepsilon)$-alternating approximate bisimilarity is not always transitive and then is not always an equivalence relation.
An  example is given below.

\begin{example}\label{ex:not equivalent}
Consider the alternating transition system $(\{q_1,q_2,q_3\},\{p_1,p_2,p_3\},\\ \{1\},\Pi,\hbar)$, where $\Pi(q_i)=p_i$ and $\hbar(q_i,1)=\{\{q_i\}\}$ for $i=1,2,3$.
Let $Ag=\{1\}$.
Define a distance function ${d}$ over $\{p_1,p_2,p_3\}$ as: for any $p_i,p_j\in\{p_1,p_2,p_3\}$, ${d}(p_i,p_j)=|i-j|$.
Clearly, this function ${d}$ is a metric.
According to Definition~\ref{def:approximate bisimi}, it is not difficult to see that $p_1\sim^1_{A\!g}p_2$, $p_2\sim^1_{A\!g}p_3$ and $p_1\not\sim^1_{A\!g}p_3$.
Thus $\sim^1_{A\!g}$ is  not an equivalence relation.
\end{example}

\section{ATL$_\varepsilon$, $H^{\varepsilon}_{A\!g}$ and $E^{\varepsilon}_{A\!g}$}\label{sec:relation}

In this and the next sections, we will establish a logical characterization of $(Ag,\varepsilon)$-alternating approximate bisimilarity.
To this end, a modal language is introduced  below, which is obtained by adding the diamond operator $\langle\varepsilon\rangle$ to ATL.

\begin{definition}\label{def:logic app}
Let $\varepsilon\in \mathbb{R}^0_+$, $\mathbb{P}$ a finite set of propositions and let $\Omega$ be a set of agents.
ATL$_{\varepsilon}(\mathbb{P},\Omega)$ formulae are divided into: state formulas and path formulas, which are defined inductively as:
\begin{center}
state formula $\varphi::=p|\langle\varepsilon\rangle p|\neg\varphi|\varphi\wedge\varphi|\langle\!\langle Ag\rangle\!\rangle\phi,$
\end{center}
where $p\in \mathbb{P}$, $Ag\subseteq \Omega$ and $\phi$ is a path formula;
\begin{center}
path formula $\phi::=\varphi|\neg \phi|\phi\wedge\phi|X\phi|\phi\mathbf{U}\phi,$
\end{center}
where $\varphi$ is a state formula.

The operator $\langle\!\langle\ \rangle\!\rangle$ is a path quantifier.
Given $Ag\subseteq\Omega$, an ATL$_\varepsilon(\mathbb{P},\Omega)$ formula $\alpha$ is said to be an $Ag$-ATL$_\varepsilon(\mathbb{P},\Omega)$ formula if and only if all path quantifiers occurring in $\alpha$ are parameterized by $Ag$.
\end{definition}

As usual, logical connective $\vee$ can be defined in terms of $\neg$ and $\wedge$. If $\mathbb{P}$ and $\Omega$ are clear from the context, ATL$_\varepsilon(\mathbb{P},\Omega)$ and $Ag$-ATL$_\varepsilon(\mathbb{P},\Omega)$ are often abbreviated to  ATL$_\varepsilon$ and $Ag$-ATL$_\varepsilon$, respectively.
Henceforth, we use $\varphi,\gamma,\varphi_1,\gamma_1\cdots$ to denote state formulas and $\phi,\psi,\phi_1,\psi_1,\cdots$ to denote path formulas.

\begin{definition}\label{def:semantic of ATLe}
Let $T=(S,\mathbb{P},\Omega,\Pi,\hbar)$ be an alternating transition system, $d$ a metric over $\mathbb{P}$ and $\varepsilon\in\mathbb{R}_0^+$. The satisfaction relation $\models_{s}$ ($\models_p$) between the states (the infinite state sequence $\sigma\in S^\omega$, respectively) of $T$ and  state formulas (path formulas, respectively) is inductively defined as:  for any $q\in S$ and $\sigma\in S^{\omega}$,

$\bullet$ $(T,d),q\models_s p$ iff $p=\Pi(q)$ for any $p\in\mathbb{P}$;

$\bullet$ $(T,d),q\models_s\langle\varepsilon\rangle p$ iff ${d}(p,\Pi(q))\leq\varepsilon$;

$\bullet$ $(T,d),q\models_s\neg\varphi$ iff $T,q\models_s\varphi$ does not hold;

$\bullet$ $(T,d),q\models_s\varphi_1\wedge\varphi_2$ iff $(T,d),q\models_s\varphi_1$ and $(T,d),q\models_s\varphi_2$;

$\bullet$ $(T,d),q\models_s\langle\!\langle Ag\rangle\!\rangle\phi$ iff there exists a strategy  $F_{A\!g}$ of $Ag$ such that $(T,d),\sigma\models_p\phi$ for any $\sigma\in Out(q,F_{A\!g})$;

$\bullet$ for any state formula $\varphi$, $(T,d),\sigma\models_p \varphi$ iff $(T,d),\sigma[1]\models_s\varphi$;

$\bullet$ $(T,d),\sigma\models_p\mathbf{X}\phi$ iff $(T,d),\sigma[2,\infty]\models_p\phi$;

$\bullet$ $(T,d),\sigma\models_p\phi_1\mathbf{U}\phi_2$ iff there exists $i\in\mathbb{N}$ such that $(T,d),\sigma[i,\infty]\models_p\phi_2$ and for any $j<i$, $(T,d),\sigma[j,\infty]\models_p\phi_1$;

$\bullet$ $(T,d),\sigma\models_p\neg\phi$ (or $\phi_1\wedge\phi_2$) can be defined similarly to $\models_s$.
\end{definition}

For convenience, the subscripts of $\models_s$ and $\models_p$ will be omitted in this paper.
In the following, two rank functions are introduced as usual.

\begin{definition}\label{def:complexity of formula}
Let $\varepsilon\in\mathbb{R}_0^+$, $\mathbb{P}$ a finite set of propositions and let $\Omega$ be a set of agents.
The rank function $\xi_s$ ($\xi_p$) mapping ATL$_\varepsilon$ state formulas (path formulas, respectively) to natural numbers is defined as:

(1) for any $p\in\mathbb{P}$, $\xi_s(p)=1$ and $\xi_s(\langle\varepsilon\rangle p)=1$;

(2) $\xi_s(\neg \varphi)=\xi_s(\varphi)+1$;

(3) $\xi_s(\langle\!\langle Ag\rangle\!\rangle\phi)=\xi_p(\phi)+1$;

(4) $\xi_s(\varphi_1\wedge\varphi_2)=\max\{\xi_s(\varphi_1),\xi_s(\varphi_2)\}+1$;

(5) for any state formula $\varphi$, $\xi_p(\varphi)=\xi_s(\varphi)+1$;

(6) $\xi_p(\neg\phi)=\xi_p(\phi)+1$;

(7) $\xi_p(\mathbf{X}\phi)=\xi_p(\phi)+1$;

(8) $\xi_p(\phi_1\wedge\phi_2)=\max\{\xi_p(\phi_1),\xi_p(\phi_2)\}+1$;

(9) $\xi_p(\phi_1\mathbf{U}\phi_2)=\max\{\xi_p(\phi_1),\xi_p(\phi_2)\}+1$.
\end{definition}

This paper aims to establish a modal characterization of $(Ag,\varepsilon)$-alternating approximate bisimilarity in terms of $Ag$-ATL$_{\varepsilon}$.
However, as shown in Example~\ref{ex:not equivalent}, $(Ag,\varepsilon)$-alternating approximate bisimilarity is not always an equivalence relation. Then it may not coincide with modal equivalence w.r.t any modal logic.
In other words, the modal characterization of $(Ag,\varepsilon)$-alternating approximate bisimilarity can not be provided in the usual style.

To overcome this defect, two binary relations between formulas will be introduced, which will  play the central roles in this paper.
Before giving them formally, we explain the motivation behind these
notions. Recall that two states are $(Ag,\varepsilon)$-alternating approximate bisimilar if and only if they
satisfy the forth and back conditions in Theorem~\ref{th:approximate bisi}. So, in order to
establish the modal characterization of $(Ag,\varepsilon)$-alternating approximate bisimilarity, we need to
formalize these conditions in terms of ATL$_{\varepsilon}$ formulas. According to the semantics of ATL$_{\varepsilon}$, we have the following observation.

\textit{For any $\varepsilon\in \mathbb{R}_0^+$, state $q_1$ of $T_1$ and state $q_2$ of $T_2$, $q_1\sim^\varepsilon_{A\!g} q_2$ implies that for each ~$p\in\mathbb{P}$, $(T_1,d), q_1\models p$ implies $(T_2,d), q_2\models
\left\langle \varepsilon \right\rangle p$ and vice verse.}

This simple observation gives us a hint about the logical characterization
of $(Ag,\varepsilon)$-alternating approximate bisimilarity. That is, we may characterize it in terms of an appropriate binary relation $H$ over
ATL$_\varepsilon$ state formulae, and this
characterization will possess the form ``$q_1\sim^\varepsilon_{A\!g} q_2$ iff for any pair
$(\varphi $, $\gamma )\in H$, $(T_1,d), q_1\models \varphi $ implies $%
(T_2,d),q_2\models \gamma $, and vice versa''. To provide such relation~$H$, we introduce the notions below.

\begin{definition}\label{def:relation of formula 1}
Let $\mathbb{P}$ be a finite set of propositions, $\varepsilon \in \mathbb{R}^0_+$, $\Omega$ a set of agents and $Ag\subseteq \Omega$.
The binary relation $H^{\varepsilon}_{A\!g}(\mathbb{P},\Omega)$ over $Ag$-ATL$_{\varepsilon}$ state formulas and the binary relation $E^{\varepsilon}_{A\!g}(\mathbb{P},\Omega)$ over $Ag$-ATL$_{\varepsilon}$ path formulas are the smallest pair of relations satisfying the following conditions (i.e., for any pair of relations $H$ and $E$ over states formulas and path formulas, respectively, if they satisfy the following conditions then $H^{\varepsilon}_{A\!g}(\mathbb{P},\Omega)\subseteq H$ and $E^{\varepsilon}_{A\!g}(\mathbb{P},\Omega)\subseteq E$):

(1) for any $p\in \mathbb{P}$, $(p,\langle\varepsilon\rangle p) \in H^{\varepsilon}_{A\!g}(\mathbb{P},\Omega)$;

(2) if $(\varphi, \gamma)\in H^{\varepsilon}_{A\!g}(\mathbb{P},\Omega) $, then $(\neg\gamma, \neg\varphi)\in H^{\varepsilon}_{A\!g}(\mathbb{P},\Omega) $;

(3) if $(\varphi_i, \gamma_i)\in H^{\varepsilon}_{A\!g}(\mathbb{P},\Omega) $ for $i=1,2$, then $(\varphi_1\wedge\varphi_2,\gamma_1\wedge\gamma_2)\in H^{\varepsilon}_{A\!g}(\mathbb{P},\Omega) $;

(4) if $(\psi,\phi)\in E^{\varepsilon}_{A\!g}(\mathbb{P},\Omega) $, then $(\langle\!\langle Ag\rangle\!\rangle\psi, \langle\!\langle Ag\rangle\!\rangle \phi)\in H^{\varepsilon}_{A\!g}(\mathbb{P},\Omega) $;

(5) if $(\varphi, \gamma)\in H^{\varepsilon}_{A\!g}(\mathbb{P},\Omega)$, then $(\varphi, \gamma)\in E^{\varepsilon}_{A\!g}(\mathbb{P},\Omega) $;

(6) if  $(\psi,\phi)\in E^{\varepsilon}_{A\!g}(\mathbb{P},\Omega) $, then $(\neg\phi,\neg\psi)\in E^{\varepsilon}_{A\!g}(\mathbb{P},\Omega) $;

(7) if $(\psi_i,\phi_i)\in E^{\varepsilon}_{A\!g}(\mathbb{P},\Omega)$ for $i=1,2$, then $(\psi_1\wedge\psi_2,\phi_1\wedge\phi_2)\in E^{\varepsilon}_{A\!g}(\mathbb{P},\Omega)$;

(8) if $(\psi,\phi)\in E^{\varepsilon}_{A\!g}(\mathbb{P},\Omega)$, then $(\mathbf{X}\psi,\mathbf{X}\phi)\in E^{\varepsilon}_{A\!g}(\mathbb{P},\Omega) $;

(9) if $(\psi_i,\phi_i)\in E^{\varepsilon}_{A\!g}(\mathbb{P},\Omega)$ for $i=1,2$, then $(\psi_1\mathbf{U}\psi_2, \phi_1\mathbf{U}\phi_2)\in E^{\varepsilon}_{A\!g}(\mathbb{P},\Omega)$.
\end{definition}

For convenience,  if $\mathbb{P}$ and $\Omega$ are clear from the context, $H^{\varepsilon}_{A\!g}(\mathbb{P},\Omega)$ and $E^{\varepsilon}_{A\!g}(\mathbb{P},\Omega)$ are often abbreviated to $H^{\varepsilon}_{A\!g}$ and $E^{\varepsilon}_{A\!g}$, respectively.
The following result guarantees the existence of these two relations.

\begin{proposition}\label{pro:preserve under intersection}
Let $\mathbb{P}$ be a finite set of propositions, $\varepsilon \in \mathbb{R}^0_+$, $\Omega$ a set of agents and $Ag\subseteq \Omega$. Then

(i) Let $I$ be an index set. If for each $i\in I$, the binary relation $H_i$ over $Ag$-ATL$_{\varepsilon}$ state formulas and the binary relation $E_i$ over $Ag$-ATL$_{\varepsilon}$ path formulas satisfy the conditions in Definition~\ref{def:relation of formula 1}, then so is $(\bigcap_{i\in I}H_i,\bigcap_{i\in I}E_i)$.

(ii) The smallest pair of relations satisfying the conditions in Definition~\ref{def:relation of formula 1} exist.
\end{proposition}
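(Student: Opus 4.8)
The plan is to treat the two claims in order, reducing everything to the standard fact that a property defined by closure conditions is preserved under intersection, from which the existence of a least such object follows automatically. For part (i) I would verify directly that the componentwise intersection $(\bigcap_{i\in I}H_i,\bigcap_{i\in I}E_i)$ again satisfies each of the nine clauses of Definition~\ref{def:relation of formula 1}. For part (ii) I would then take the intersection of \emph{all} pairs satisfying those clauses and invoke (i) to show that this intersection is itself such a pair, hence the smallest one.

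For part (i), the key observation is that $\bigcap_{i\in I}H_i\subseteq H_j$ and $\bigcap_{i\in I}E_i\subseteq E_j$ for every $j\in I$, and that clauses (1)--(9) are all of the form ``if such-and-such pairs are present, then such-and-such derived pair is present.'' The seed clause (1) is handled by noting that $(p,\langle\varepsilon\rangle p)$ lies in $H_i$ for every $i$, hence in the intersection. For each closure clause (2)--(9) I would argue uniformly: if the hypothesis pairs lie in the intersection, then they lie in every $H_i$ (or $E_i$); applying that clause inside each $(H_i,E_i)$ places the conclusion pair in every $H_i$ (or $E_i$); therefore the conclusion pair lies in the intersection. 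The only point deserving attention is that clauses (4) and (5) cross between the two components---(4) sends a pair of $E$ to a pair of $H$, and (5) sends a pair of $H$ to a pair of $E$---but because we intersect both components simultaneously, the same uniform argument applies without change.

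For part (ii), let $\mathcal{F}$ be the collection of all pairs $(H,E)$ satisfying the conditions of Definition~\ref{def:relation of formula 1}. First I would observe that $\mathcal{F}\neq\emptyset$: the universal pair, consisting of all pairs of $Ag$-ATL$_\varepsilon$ state formulas together with all pairs of $Ag$-ATL$_\varepsilon$ path formulas, satisfies (1)--(9) vacuously, since every conclusion pair already belongs to it. I would then set $(H^{\varepsilon}_{A\!g},E^{\varepsilon}_{A\!g})$ to be the componentwise intersection of all members of $\mathcal{F}$. By part (i) this pair again satisfies the conditions, so it belongs to $\mathcal{F}$; and by construction it is contained componentwise in every member of $\mathcal{F}$. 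Thus it is a lower bound of $\mathcal{F}$ that itself lies in $\mathcal{F}$, i.e. the least element, which is exactly the smallest pair required.

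I do not expect a genuine obstacle here; the argument is entirely routine. The only matters needing a little care are keeping $H$ and $E$ coupled as a single pair throughout part (i) (so that the cross-clauses (4) and (5) go through), and ensuring in part (ii) that $\mathcal{F}$ is nonempty, so that the intersection is the honest least closed pair rather than a degenerate empty intersection.
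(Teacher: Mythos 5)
Your proposal is correct and follows essentially the same route as the paper's proof: verify that the nine closure clauses of Definition~\ref{def:relation of formula 1} are preserved under componentwise intersection, note that the universal pair (all pairs of state formulas, all pairs of path formulas) satisfies the clauses so the family of closed pairs is nonempty, and obtain the smallest pair as the intersection of all members of that family. The only cosmetic difference is that you treat the closure clauses uniformly and explicitly flag the cross-clauses (4) and (5), whereas the paper checks just two sample cases; the underlying argument is identical.
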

\begin{proof}
Clearly, the pair of relations $H$ and $E$ satisfy the conditions in Definition~\ref{def:relation of formula 1}, where $H\triangleq\{(\varphi_1,\varphi_2): \varphi_1$ and $\varphi_2$ are $Ag$-ATL$_{\varepsilon}$ state formulas\} and $E\triangleq\{(\phi_1,\phi_2): \phi_1$ and $\phi_2$ are $Ag$-ATL$_{\varepsilon}$ path formulas$\}$.
So it follows from (i) that (ii) holds.
We prove (i) below.

Assume that for each $i\in I$, the binary relation $H_i$ over $Ag$-ATL$_{\varepsilon}$ state formulas and the binary relation $E_i$ over $Ag$-ATL$_{\varepsilon}$ path formulas satisfy the conditions in Definition~\ref{def:relation of formula 1}.
It suffices to show that the pair $(\bigcap_{i\in I}H_i,\bigcap_{i\in I}E_i)$ satisfies the conditions $(1)$-(9)  in Definition~\ref{def:relation of formula 1}. We will provide two sample cases.

(1) Let $p\in \mathbb{P}$. Then for each $i\in I$, since $H_i$  and  $E_i$ satisfy the conditions in Definition~\ref{def:relation of formula 1}, it follows that $(p,\langle\varepsilon \rangle p)\in H_i$. Thus we have $(p,\langle\varepsilon \rangle p)\in \bigcap_{i\in I} H_i$.

(2) Let $(\varphi, \gamma)\in \bigcap_{i\in I} H_i$. Then $(\varphi, \gamma)\in H_i$ for each $i\in I$. So for each $i\in I$, since $H_i$  and  $E_i$ satisfy the conditions in Definition~\ref{def:relation of formula 1}, we get $(\neg\gamma,\neg\varphi)\in H_i$. Therefore, we obtain $(\neg\gamma,\neg\varphi)\in \bigcap_{i\in I} H_i$.
\qed\end{proof}

A few of useful properties of $H^{\varepsilon}_{A\!g}$ and $E^{\varepsilon}_{A\!g}$ are listed below.

\begin{lemma}\label{lem:properties of H and E}
For any $(\varphi, \gamma)\in H^{\varepsilon}_{A\!g}$ and $(\psi, \phi)\in E^{\varepsilon}_{A\!g}$, the following hold:

$(a)$ $\varphi$ is in one of the following forms: $p$, $\neg\gamma_1, \varphi_1\wedge\varphi_2$ and $\langle\!\langle Ag\rangle\!\rangle\psi$;

$(b)$ $\varphi$ and $\psi$ can not be in the form of $\langle\varepsilon\rangle p$;

$(c)$ if $\varphi=p$, then $\gamma=\langle\varepsilon\rangle p$ and $\xi_s(\varphi)=\xi_s(\gamma)=1$;

$(d)$ if $\varphi=\neg \gamma_1$, then there exists a state formula $\varphi_1$ such that $(\varphi_1,\gamma_1)\in H^{\varepsilon}_{A\!g}$, $\gamma=\neg\varphi_1$ and $\xi_{s}(\varphi)=\xi_s(\gamma)=\xi_s(\gamma_1)+1$;

$(e)$ if $\varphi=\varphi_1\wedge\varphi_2$, then there exist state formulas $\gamma_1$ and $\gamma_2$ such that $(\varphi_i,\gamma_i)\in H^{\varepsilon}_{A\!g}$ $(i=1,2)$, $\gamma=\gamma_1\wedge\gamma_2$ and $\xi_s(\varphi)=\xi_s(\gamma)=\max\{\xi_s(\gamma_1),\xi_s(\gamma_2)\}+1$;

$(f)$ if $\varphi=\langle\!\langle Ag\rangle\!\rangle\psi_1$, then there exists a path formula $\phi_1$ such that $(\psi_1,\phi_1)\in E^{\varepsilon}_{A\!g}$, $\gamma=\langle\!\langle Ag\rangle\!\rangle \phi_1$ and $\xi_s(\varphi)=\xi_s(\gamma)=\xi_p(\psi_1)+1$;

$(g)$ if the path formula $\psi$ is also a state formula, then $\phi$ is also a state formula, $(\psi,\phi)\in H^{\varepsilon}_{A\!g}$ and $\xi_p(\phi)=\xi_p(\psi)=\xi_s(\psi)+1$;

$(h)$ if $\psi=\neg\phi_1$, then there exists a path formula $\psi_1$ such that $(\psi_1,\phi_1)\in E^{\varepsilon}_{A\!g}$, $\phi=\neg\psi_1$ and $\xi_p(\phi)=\xi_p(\psi)=\xi_p(\psi_1)+1$;

$(i)$ if  $\psi=\psi_1\wedge\psi_2$, then there exist path formulas $\phi_1$ and $\phi_2$ such that $(\psi_i,\phi_i)\in E^{\varepsilon}_{A\!g} (i=1,2)$, $\phi=\phi_1\wedge\phi_2$ and $\xi_p(\phi)=\xi_p(\psi)=\max\{\xi_p(\psi_1),\xi_p(\psi_2)\}+1$;

$(j)$ if $\psi=\mathbf{X}\psi_1$, then there exists a path formula $\phi_1$ such that $(\psi_1,\phi_1)\in E^{\varepsilon}_{A\!g}$, $\phi=\mathbf{X}\phi_1$ and $\xi_p(\phi)=\xi_p(\psi)=\xi_p(\psi_1)+1$;

$(k)$ if $\psi=\psi_1\mathbf{U}\psi_2$, then there exist path formulas $\phi_1$ and $\phi_2$ such that $(\psi_i,\phi_i)\in E^{\varepsilon}_{A\!g} (i=1,2)$, $\phi=\phi_1\mathbf{U}\phi_2$ and $\xi_p(\phi)=\xi_p(\psi)=\max\{\xi_p(\psi_1),\xi_p(\psi_2)\}+1$.
\end{lemma}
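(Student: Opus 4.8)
The plan is to exploit the fact, guaranteed by Proposition~\ref{pro:preserve under intersection}, that $(H^{\varepsilon}_{A\!g},E^{\varepsilon}_{A\!g})$ is the \emph{smallest} pair closed under conditions (1)--(9) of Definition~\ref{def:relation of formula 1}. First I would observe that this smallest pair coincides with the set of all pairs obtainable by finitely many applications of the rules (1)--(9): the collection of such \emph{derivable} pairs is itself closed under (1)--(9), while every closed pair must contain each derivable pair, so the derivable pairs are contained in every pair satisfying the conditions and hence, by minimality, equal $(H^{\varepsilon}_{A\!g},E^{\varepsilon}_{A\!g})$. This identification supplies exactly what the lemma needs: an \emph{inversion principle}, namely that every pair in $H^{\varepsilon}_{A\!g}$ (resp.\ $E^{\varepsilon}_{A\!g}$) is the conclusion of one of the rules whose premises are again pairs in the relations, together with a well-founded induction on the height of derivations.

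With this in hand, claims $(a)$ and $(b)$ are immediate by inspecting which rules can introduce a first component: for $H^{\varepsilon}_{A\!g}$ only rules (1)--(4) apply, whose left-hand first components are respectively $p$, $\neg(\cdot)$, $(\cdot)\wedge(\cdot)$ and $\langle\!\langle Ag\rangle\!\rangle(\cdot)$, so $\varphi$ has one of the four forms in $(a)$ and is never $\langle\varepsilon\rangle p$; and since no rule (1)--(9) ever produces a first component of the shape $\langle\varepsilon\rangle p$, claim $(b)$ follows for both relations. For the decomposition claims $(c)$--$(f)$ I would invert on the outermost connective of $\varphi$: the four producing rules are syntactically disjoint in their first components, so the form of $\varphi$ pins down the last rule uniquely, and reading off its conclusion yields the asserted shape of $\gamma$ and the existence of the subformula pair in $H^{\varepsilon}_{A\!g}$ (or in $E^{\varepsilon}_{A\!g}$ for $(f)$); note that the swap in rule (2) is precisely what forces the pattern $\gamma=\neg\varphi_1$ in $(d)$. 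The rank equalities then drop out of Definition~\ref{def:complexity of formula}, because each rule applies the \emph{same} outermost operator to both components, with component-wise equality of ranks on the subformulas carried along as an invariant of the induction.

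Claims $(g)$--$(k)$ for $E^{\varepsilon}_{A\!g}$ are handled the same way, inverting on the outermost form of $\psi$, and this is where I expect the main obstacle. For $E^{\varepsilon}_{A\!g}$ the producing rules are (5)--(9), and their first-component forms are \emph{not} disjoint: a pair whose first component is a negation or a conjunction can arise either from the injection rule (5) (when $\psi$ is itself a state formula) or from the genuine path-formula rules (6)/(7). The task for $(h)$ and $(i)$ is therefore to show the asserted structure holds uniformly across both derivation possibilities: the rule-(6)/(7) sub-case is direct from the rule, whereas in the rule-(5) sub-case I would use that $(\psi,\phi)\in H^{\varepsilon}_{A\!g}$ with $\psi$ a state formula, apply the already-proved $H$-inversion facts $(d)$/$(e)$, and re-inject the resulting subformula pair into $E^{\varepsilon}_{A\!g}$ via rule (5); checking that both routes deliver the same decomposition and the same rank bookkeeping (using $\xi_p(\chi)=\xi_s(\chi)+1$ for state formulas $\chi$ from clause (5) of Definition~\ref{def:complexity of formula}) is where the care concentrates.

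Claim $(g)$ deserves separate attention and is, I think, the most delicate point: to conclude $(\psi,\phi)\in H^{\varepsilon}_{A\!g}$ from ``$\psi$ is a state formula'' I would first establish an auxiliary invariant by induction on derivations, namely that for every $(\psi,\phi)\in E^{\varepsilon}_{A\!g}$ the component $\psi$ is a state formula iff $\phi$ is (all $H$-pairs have both components state formulas, rule (5) preserves this, rules (6)/(7) preserve the equi-state status of their premise, and rules (8)/(9) produce $\mathbf{X}$- and $\mathbf{U}$-formulas which are never state formulas). Granting this invariant, a short induction handles $(g)$: if the derivation of $(\psi,\phi)$ ends in rule (5) it is trivial, while if it ends in (6) or (7) the premises again have state-formula first components, so the induction hypothesis places them in $H^{\varepsilon}_{A\!g}$ and rules (2)/(3) reassemble $(\psi,\phi)$ inside $H^{\varepsilon}_{A\!g}$; rules (8)/(9) are excluded because $\psi$ is a state formula. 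Finally the $\mathbf{X}$- and $\mathbf{U}$-cases $(j)$,$(k)$ need no such reconciliation, since those first-component forms can only come from (8) and (9) respectively, so the inversion there is unambiguous and the corresponding rank identities follow at once.
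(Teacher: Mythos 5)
Your proposal is correct and takes exactly the approach the paper leaves implicit: the paper's entire proof is the single word ``Straightforward,'' and the intended argument is precisely your rule induction on the inductive definition of $H^{\varepsilon}_{A\!g}$ and $E^{\varepsilon}_{A\!g}$ combined with inversion on the last rule applied, carrying the component-wise rank equality as an invariant. Your elaboration is sound, and you correctly isolate the only genuinely delicate points that keep the lemma from being entirely trivial: the overlap between the injection rule (5) and the path rules (6)/(7) when inverting on $E^{\varepsilon}_{A\!g}$, and the auxiliary ``$\psi$ is a state formula iff $\phi$ is'' invariant needed for clause $(g)$.
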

\begin{proof}
Straightforward.
\qed\end{proof}

It follows from the above result that $(p,\langle\varepsilon\rangle p)\in H^{\varepsilon}_{A\!g}$ and $(p,\langle\varepsilon\rangle p)\in E^{\varepsilon}_{A\!g}$ but neither $(\langle\varepsilon\rangle p, p)\in H^{\varepsilon}_{A\!g}$ nor $(\langle\varepsilon\rangle p,p)\in E^{\varepsilon}_{A\!g}$.
Thus neither $H^{\varepsilon}_{A\!g}$ nor $E^{\varepsilon}_{A\!g}$ is symmetric and then none of them is  an equivalence relation.
For $\varepsilon=0$, some internal relations between $H^\varepsilon_{A\!g}$ (or, $E^\varepsilon_{A\!g}$) and $\models_s$ ($\models_p$, respectively) are revealed in the next lemma.

\begin{lemma}\label{lem:basic properties of H and E}
Let $T=(S,\mathbb{P},\Omega,\Pi,\hbar)$ be an alternating transition system and let ${d}$ be a metric over $\mathbb{P}$.
Then

(1) for any $q\in S$ and $(\varphi,\gamma)\in H^{0}_{A\!g}$, $(T,d),q\models\varphi$ if and only if $(T,d),q\models\gamma$;

(2) for any $\sigma\in S^{\omega}$ and $(\psi, \phi)\in E^0_{A\!g}$, $(T,d),\sigma \models\psi$ if and only if $(T,d),\sigma \models\phi$;

(3)for any $Ag$-ATL$_0$ state formula $\varphi_0$, there exists $(\varphi,\gamma)\in H^0_{A\!g}$ such that for any $q\in S$, $(T,d),q\models\varphi_0$ if and only if $(T,d),q\models\varphi$;

(4) for any $Ag$-ATL$_0$ path formula $\psi_0$, there exists $(\psi,\phi)\in E^0_{A\!g}$ such that for any $\sigma\in S^{\omega}$, $(T,d),\sigma\models\psi_0$ if and only if $(T,d),\sigma\models\psi$.
\end{lemma}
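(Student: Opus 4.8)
The plan is to prove parts (1)--(2) together by exploiting the minimality of the pair $(H^0_{A\!g},E^0_{A\!g})$, and then to derive (3)--(4) by a simultaneous structural induction on formulas that uses (1)--(2) as it goes. The one observation driving everything is that when $\varepsilon=0$ the approximate diamond collapses: since $d$ is a metric, $d(p,\Pi(q))\leq 0$ holds iff $p=\Pi(q)$, so by Definition~\ref{def:semantic of ATLe} the state formulas $p$ and $\langle 0\rangle p$ are satisfied by exactly the same states. This is precisely what makes the seed pair $(p,\langle\varepsilon\rangle p)$ of Definition~\ref{def:relation of formula 1} semantically harmless once $\varepsilon=0$.

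For (1) and (2), I would introduce the relation $H'$ consisting of all pairs $(\varphi,\gamma)$ of $Ag$-ATL$_0$ state formulas such that $(T,d),q\models\varphi$ iff $(T,d),q\models\gamma$ for every $q\in S$, together with its evident path-formula analogue $E'$ (all pairs $(\psi,\phi)$ of $Ag$-ATL$_0$ path formulas agreeing on every $\sigma\in S^{\omega}$). The goal is to verify that $(H',E')$ satisfies all nine closure conditions of Definition~\ref{def:relation of formula 1}. Condition (1) is exactly the collapse observation above; conditions (2),(3),(6),(7) are immediate from the Boolean clauses of $\models$; conditions (8),(9) follow by unfolding the clauses for $\mathbf{X}$ and $\mathbf{U}$; and condition (5) uses that a state formula read as a path formula is evaluated at $\sigma[1]$. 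The only clause requiring a moment's care is (4): from $(\psi,\phi)\in E'$, i.e. $\psi$ and $\phi$ agreeing on \emph{every} $\sigma\in S^{\omega}$, one reads off that for any strategy $F_{A\!g}$ of $Ag$ the condition ``$(T,d),\sigma\models\psi$ for all $\sigma\in Out(q,F_{A\!g})$'' is equivalent to the same condition with $\phi$, whence $\langle\!\langle Ag\rangle\!\rangle\psi$ and $\langle\!\langle Ag\rangle\!\rangle\phi$ hold at exactly the same states. Once $(H',E')$ is shown to satisfy all the conditions, minimality of $(H^0_{A\!g},E^0_{A\!g})$ (Definition~\ref{def:relation of formula 1}, whose well-definedness is granted by Proposition~\ref{pro:preserve under intersection}) yields $H^0_{A\!g}\subseteq H'$ and $E^0_{A\!g}\subseteq E'$, which is precisely (1) and (2).

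For (3) and (4), I would argue by a simultaneous structural induction on the state formula $\varphi_0$ and the path formula $\psi_0$, constructing in each case a pair in $H^0_{A\!g}$ (respectively $E^0_{A\!g}$) whose \emph{left} component is satisfied by the same states (sequences) as the given formula. The atomic cases $\varphi_0=p$ and $\varphi_0=\langle 0\rangle p$ both use the seed $(p,\langle\varepsilon\rangle p)\in H^0_{A\!g}$: its left component $p$ agrees with $p$ trivially and with $\langle 0\rangle p$ by the collapse observation. The conjunction, $\mathbf{X}$, $\mathbf{U}$, $\langle\!\langle Ag\rangle\!\rangle$, and state-as-path cases feed the inductive pairs through conditions (3),(8),(9),(4),(5) respectively, the required agreement of the new left component with $\varphi_0$ (or $\psi_0$) being read off from the matching clause of $\models$. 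The one step that is not purely mechanical is negation: for $\varphi_0=\neg\varphi_0'$ the induction supplies $(\varphi',\gamma')\in H^0_{A\!g}$ with $\varphi'$ equivalent to $\varphi_0'$, and condition (2) produces $(\neg\gamma',\neg\varphi')\in H^0_{A\!g}$, whose left component is $\neg\gamma'$; to conclude that $\neg\gamma'$ is equivalent to $\neg\varphi_0'$ one needs $\gamma'$ equivalent to $\varphi_0'$, which is obtained by combining the inductive equivalence of $\varphi'$ with $\varphi_0'$ and part (1) applied to $(\varphi',\gamma')$. Keeping this dependence of (3)/(4) on the already-established (1)/(2) straight in the negation clauses is the main thing to watch; the remaining cases are routine unfoldings of the semantic clauses of Definition~\ref{def:semantic of ATLe}.
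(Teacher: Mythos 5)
Your proposal is correct. For parts (3)--(4) you follow essentially the paper's own route: a simultaneous induction on the structure of $\varphi_0$ and $\psi_0$ (the paper phrases it as induction on the ranks $\xi_s,\xi_p$), with the seed pair $(p,\langle 0\rangle p)$ disposing of both atomic cases, and with the negation case needing part (1) to transfer the inductive equivalence from the left component $\varphi'$ to the right component $\gamma'$ --- a dependency the paper buries under ``the proofs of other cases are similar'' but which you correctly isolate as the only non-mechanical step. Where you genuinely diverge is in parts (1)--(2). The paper proves these by induction on ranks, using the shape-decomposition Lemma~\ref{lem:properties of H and E}: any pair in $H^0_{A\!g}$ (resp.\ $E^0_{A\!g}$) has matching outermost connectives with subformula pairs of strictly smaller rank, so the semantic clauses of Definition~\ref{def:semantic of ATLe} push the equivalence through. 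You instead invoke minimality directly: you take $H'$ (resp.\ $E'$) to be semantic equivalence over $Ag$-ATL$_0$ state (path) formulas with respect to the fixed $(T,d)$, verify the nine closure conditions of Definition~\ref{def:relation of formula 1} --- the seed case being exactly the collapse $d(p,p')\leq 0\Leftrightarrow p=p'$, and condition (4) following because agreement on \emph{all} infinite sequences in particular gives agreement on all outcomes of any strategy --- and conclude $H^0_{A\!g}\subseteq H'$, $E^0_{A\!g}\subseteq E'$ from the minimality clause built into Definition~\ref{def:relation of formula 1}. This is sound (that clause is precisely the induction principle you need) and it makes parts (1)--(2) independent of Lemma~\ref{lem:properties of H and E}: you never need to know what shapes pairs in $H^0_{A\!g}$ can take. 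In context the paper loses nothing by its choice, since Lemma~\ref{lem:properties of H and E} is needed elsewhere anyway, but your argument is the more self-contained and arguably the cleaner of the two.
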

\begin{proof}
Since ${d}$ is a metric over $\mathbb{P}$, for any $p,p'\in \mathbb{P}$, ${d}(p,p')\leq 0$ if and only if $p=p'$.
Further, by Definition~\ref{def:semantic of ATLe} and Lemma~\ref{lem:properties of H and E}, it is easy to prove (1) and (2) by induction on the ranks of $\varphi$ and $\psi$.
Next, we prove (3) and (4) simultaneously by induction  on the ranks of $\varphi_0$ and $\psi_0$.

By Definition~\ref{def:complexity of formula}, it is clear that if $\xi_s(\varphi_0)=0$ and $\xi_p(\psi_0)=0$ then (3) and (4) hold.

Suppose that $\xi_s(\varphi_0)=\xi_p(\psi_0)=n+1$ and the items (3) and (4) hold for any $Ag$-ATL$_0$ state formula $\varphi$ and $Ag$-ATL$_0$ path formula $\psi$ with $\xi_s(\varphi)\leq n$ and $\xi_p(\psi)\leq n$.
According to Definition~\ref{def:complexity of formula}, $\varphi_0$ is in one of the following forms: $p$, $\langle 0 \rangle p$, $\neg\gamma_1, \varphi_1\wedge\varphi_2$ and $\langle\!\langle Ag\rangle\!\rangle\psi$, and $\psi_0$ is in the form of  $\varphi$, $\neg\phi_1, \psi_1\wedge\psi_2$, $\mathbf{X}\psi$ or $\psi_1\mathbf{U}\psi_2$.
In the following, we just provide two sample cases. The proofs of other cases are similar.

Suppose that $\varphi_0=\langle 0\rangle p$ for some $p\in \mathbb{P}$.
It follows from (1) and $(p,\langle 0\rangle p)\in H^0_{A\!g}$ that for any $q\in S$, $(T,d),q\models \varphi_0$ if and only if $(T,d),q\models p$.
We set $\varphi\triangleq p$ and $\gamma\triangleq \langle 0\rangle p$.
Clearly, $(\varphi,\gamma)\in H^0_{A\!g}$ and for any $q\in S$, $(T,d),q\models\varphi_0$ if and only if $(T,d),q\models\varphi$.

Suppose that $\varphi_0=\langle\!\langle Ag\rangle\!\rangle\psi$. Then by Definition~\ref{def:complexity of formula}, we get $\xi_p(\psi)=n$. Further, by induction hypothesis, there exists $(\psi',\phi')\in E^0_{A\!g}$ such that for any $\sigma\in S^\omega$, $(T,d),\sigma\models\psi$ if and only if $(T,d),\sigma\models\psi'$.
We set $\varphi\triangleq \langle\!\langle Ag\rangle\!\rangle\psi'$ and $\gamma\triangleq \langle\!\langle Ag\rangle\!\rangle\phi'$.
Then it follows from Definition~\ref{def:relation of formula 1} that $(\varphi,\gamma)\in H^0_{A\!g}$.
Moreover, by Definition~\ref{def:semantic of ATLe}, it is clear that for any $q\in S$, $(T,d),q\models\varphi_0$ if and only if $(T,d),q\models\varphi$.
\qed\end{proof}

\begin{proposition}\label{pro:basic properties of H and E}
Let $T=(S,\mathbb{P},\Omega,\Pi,\hbar)$ be an alternating transition system, $d$ a metric over $\mathbb{P}$ and $\varepsilon\in \mathbb{R}^0_+$.
Then for any $q\in S$, $\sigma\in S^{\omega}$, $(\varphi,\gamma)\in H^{\varepsilon}_{A\!g}$ and $(\psi,\phi)\in E^{\varepsilon}_{A\!g}$, the following conclusions hold:

(1) if $(T,d),q\models\varphi$ then $(T,d),q\models\gamma$;

(2) if $(T,d),\sigma\models\psi$ then $(T,d),\sigma\models\phi$.
\end{proposition}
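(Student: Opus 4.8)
The plan is to turn conclusions (1) and (2) into a single closure statement and then invoke the minimality built into Definition~\ref{def:relation of formula 1}. Concretely, fix $(T,d)$ and $\varepsilon$, and define two relations, $H^{*}$ over state formulas and $E^{*}$ over path formulas, by
\[ H^{*}\triangleq\{(\varphi,\gamma): (T,d),q\models\varphi \text{ implies } (T,d),q\models\gamma \text{ for every } q\in S\} \]
and
\[ E^{*}\triangleq\{(\psi,\phi): (T,d),\sigma\models\psi \text{ implies } (T,d),\sigma\models\phi \text{ for every } \sigma\in S^{\omega}\}. \]
A pair lies in $H^{*}$ (resp.\ $E^{*}$) exactly when it satisfies the implication asserted in the proposition, so it suffices to prove $H^{\varepsilon}_{A\!g}\subseteq H^{*}$ and $E^{\varepsilon}_{A\!g}\subseteq E^{*}$. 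By Proposition~\ref{pro:preserve under intersection} and the minimality clause of Definition~\ref{def:relation of formula 1}, this reduces to a single task: checking that the pair $(H^{*},E^{*})$ itself satisfies all nine closure conditions (1)--(9). Since clauses (4) and (5) couple the two relations, the verification treats $H^{*}$ and $E^{*}$ simultaneously, but no induction on formula structure is needed because each relation is defined directly by its target implication.

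Most clauses are discharged by unfolding Definition~\ref{def:semantic of ATLe} and re-assembling. For clause~(1), if $(T,d),q\models p$ then $p=\Pi(q)$, so $d(p,\Pi(q))=0\le\varepsilon$ since $\varepsilon\in\mathbb{R}^0_+$, whence $(T,d),q\models\langle\varepsilon\rangle p$ and $(p,\langle\varepsilon\rangle p)\in H^{*}$; this is the only place the hypothesis $\varepsilon\ge 0$ is used. The conjunction clauses (3) and (7) follow because satisfaction distributes over $\wedge$; clause~(8) follows because $\mathbf{X}\psi$ is evaluated at the suffix $\sigma[2,\infty]$; and clause~(9) follows because the witnessing index $i$ making $\sigma\models\psi_1\mathbf{U}\psi_2$ true, together with $(\psi_1,\phi_1),(\psi_2,\phi_2)\in E^{*}$, makes the same $i$ witness $\sigma\models\phi_1\mathbf{U}\phi_2$. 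Clause~(5) holds because $(T,d),\sigma\models\varphi$ reduces to $(T,d),\sigma[1]\models\varphi$, so a state-level implication transfers verbatim to the path level. For clause~(4), from a witnessing strategy $F_{A\!g}$ with $(T,d),\sigma\models\psi$ for all $\sigma\in Out(q,F_{A\!g})$ and from $(\psi,\phi)\in E^{*}$, the \emph{same} strategy forces $\phi$ along every outcome, giving $(T,d),q\models\langle\!\langle Ag\rangle\!\rangle\phi$.

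The conceptually essential step, and the reason these relations are one-directional rather than symmetric, is the pair of negation clauses (2) and (6). To verify clause~(2), suppose $(\varphi,\gamma)\in H^{*}$ and $(T,d),q\models\neg\gamma$; then $(T,d),q\not\models\gamma$, and the implication $q\models\varphi\Rightarrow q\models\gamma$ supplied by $(\varphi,\gamma)\in H^{*}$ forces $(T,d),q\not\models\varphi$, i.e.\ $(T,d),q\models\neg\varphi$, so $(\neg\gamma,\neg\varphi)\in H^{*}$. The order-swap $(\varphi,\gamma)\mapsto(\neg\gamma,\neg\varphi)$ is precisely what lets contraposition go through, and clause~(6) is the verbatim analogue for $E^{*}$. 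Once all nine clauses are checked, minimality delivers $H^{\varepsilon}_{A\!g}\subseteq H^{*}$ and $E^{\varepsilon}_{A\!g}\subseteq E^{*}$, which is the proposition.

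I do not anticipate a genuine obstacle: the argument is a routine closure verification, and the only points requiring care are bookkeeping, namely the simultaneous treatment of $H^{*}$ and $E^{*}$ forced by clauses (4)--(5), the use of $\varepsilon\ge 0$ in the base clause, and the order-reversal in the negation clauses. Should one instead prefer an explicit structural induction on the rank functions $\xi_s,\xi_p$ of Definition~\ref{def:complexity of formula}, Lemma~\ref{lem:properties of H and E} already records the exact form of $\gamma$ (resp.\ $\phi$) determined by the form of $\varphi$ (resp.\ $\psi$) together with the matching rank identities, so the same case analysis carries through verbatim.
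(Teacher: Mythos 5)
Your proof is correct, but it follows a genuinely different route from the paper's. The paper proves the proposition by explicit induction on the rank functions $\xi_s,\xi_p$ of Definition~\ref{def:complexity of formula}: it fixes a pair $(\varphi,\gamma)\in H^{\varepsilon}_{A\!g}$ (resp.\ $(\psi,\phi)\in E^{\varepsilon}_{A\!g}$), uses the inversion information of Lemma~\ref{lem:properties of H and E} to determine the shape of $\gamma$ (resp.\ $\phi$) from the shape of $\varphi$ (resp.\ $\psi$), and discharges the cases $\varphi=p$, $\varphi=\langle\!\langle Ag\rangle\!\rangle\psi_0$, $\psi=\psi_1\mathbf{U}\psi_2$, etc., with the induction hypothesis applied to the constituent pairs of strictly smaller rank. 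You instead exploit the universal property of $(H^{\varepsilon}_{A\!g},E^{\varepsilon}_{A\!g})$ as the \emph{smallest} pair closed under clauses (1)--(9): you define the semantic relations $H^{*},E^{*}$ consisting of exactly the pairs satisfying the target implications, verify the nine closure conditions directly against Definition~\ref{def:semantic of ATLe}, and conclude $H^{\varepsilon}_{A\!g}\subseteq H^{*}$, $E^{\varepsilon}_{A\!g}\subseteq E^{*}$ by minimality (rule induction). Your individual case verifications --- the metric fact $d(p,\Pi(q))=0\le\varepsilon$ for the base clause, reuse of the same strategy for $\langle\!\langle Ag\rangle\!\rangle$, the same witnessing index for $\mathbf{U}$, and contraposition with the order swap $(\varphi,\gamma)\mapsto(\neg\gamma,\neg\varphi)$ for the negation clauses --- are exactly the same local arguments the paper makes inside its induction, so the mathematical content coincides; what differs is the scaffolding. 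Your approach buys economy: it needs neither the rank functions nor Lemma~\ref{lem:properties of H and E}, and it makes transparent that the proposition is nothing but the statement that semantic entailment is one of the closed pairs of relations. The paper's rank induction buys uniformity: the identical template (ranks plus the inversion lemma) is reused for Lemma~\ref{lem:basic properties of H and E} and, more importantly, for Lemma~\ref{lem:characterization 2}, where the hypothesis $q_1\sim^{\varepsilon}_{A\!g}q_2$ on states makes the plain minimality trick less immediate; so the authors' choice keeps all three proofs structurally parallel.
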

\begin{proof}
We prove (1) and (2) simultaneously by induction on the ranks of $\varphi$ and $\psi$.

By Definition~\ref{def:complexity of formula}, if $\xi_s(\varphi)=0$ and $\xi_p(\psi)=0$ then the conclusions hold trivially.

Suppose that $\xi_s(\varphi)=\xi_p(\psi)=n+1$, $(\varphi,\gamma)\in H^{\varepsilon}_{A\!g}$, $(\psi,\phi)\in E^{\varepsilon}_{A\!g}$ and the items (1) and (2) hold for any $(\varphi_0,\gamma_0)\in H^{\varepsilon}_{A\!g}$ and $(\psi_0,\phi_0)\in E^{\varepsilon}_{A\!g}$ with $\xi_s(\varphi_0)\leq n$ and $\xi_p(\psi_0)\leq n$.
By Definition~\ref{def:logic app}, $\varphi$ is in one of the following forms: $p$, $\neg\gamma_1, \varphi_1\wedge\varphi_2$ and $\langle\!\langle Ag\rangle\!\rangle\psi_1$, and $\psi$ is in the form of $\varphi_1$, $\neg\phi_1, \psi_1\wedge\psi_2$, $\mathbf{X}\psi_1$ or $\psi_1\mathbf{U}\psi_2$.
In the following, we just provide some sample cases.

\textbf{Case 1} $\varphi=p$ for some $p\in\mathbb{P}$.
Then by Lemma~\ref{lem:properties of H and E}, we obtain $\gamma=\langle\varepsilon\rangle p$.
Let $q\in S$ and $(T,d),q\models \varphi$.
Then it follows from  Definition~\ref{def:semantic of ATLe} that $\Pi(q)=p$.
Since $d$ is a metric, we have $d(p,\Pi(q))=d(p,p)=0\leq \varepsilon$.
Thus by Definition~\ref{def:semantic of ATLe}, we get $(T,d),q\models \langle \varepsilon\rangle p$.

\textbf{Case 2} $\varphi=\langle \langle Ag \rangle\!\rangle\psi_0$.
Let $q\in S$ and $(T,d),q\models\varphi$.
Due to Lemma~\ref{lem:properties of H and E}, there exists a path formula $\phi_0$ such that $(\psi_0,\phi_0)\in E^{\varepsilon}_{A\!g}$, $\gamma=\langle\!\langle Ag\rangle\!\rangle \phi_0$ and $\xi_s(\varphi)=\xi_s(\gamma)=\xi_p(\psi_0)+1$.
By Definition~\ref{def:semantic of ATLe} and $(T,d),q\models\varphi$, there is a strategy $F_{A\!g}$ of $Ag$ such that $(T,d),\sigma\models\psi_0$ for any $\sigma\in Out(q,F_{A\!g})$.
Then by induction hypothesis, $(T,d),\sigma\models\phi_0$ for any $\sigma\in Out(q,F_{A\!g})$.
Therefore, by Definition~\ref{def:semantic of ATLe}, we have $(T,d),q\models\gamma$.

\textbf{Case 3} $\psi=\psi_1\mathbf{U}\psi_2$.
Let $\sigma\in S^{\omega}$ and $(T,d),\sigma\models\psi$.
It follows from Lemma~\ref{lem:properties of H and E} that for some path formulas $\phi_1$ and $\phi_2$, $(\psi_i,\phi_i)\in E^{\varepsilon}_{A\!g}$ $(i=1,2)$, $\phi=\phi_1\mathbf{U}\phi_2$ and $\xi_p(\psi)=\xi_p(\phi)=\max\{\xi_p(\psi_1),\xi_p(\psi_2)\}+1$.
Since $(T,d),\sigma\models\psi$, by Definition~\ref{def:semantic of ATLe}, there exists $i\in\mathbb{N}$ such that $(T,d),\sigma[i,\infty]\models\psi_2$ and  $(T,d),\sigma[j,\infty]\models\psi_1$ for any $j<i$.
Then by induction hypothesis, $(T,d),\sigma[i,\infty]\models\phi_2$ and  $(T,d),\sigma[j,\infty]\models\phi_1$ for any $j<i$.
Therefore, by Definition~\ref{def:semantic of ATLe}, we obtain $(T,d),\sigma\models\phi$.
\qed\end{proof}

\section{Modal characterization of alternating approximate  bisimilarity}\label{sec:characterization}
This section will establish a modal characterization of $(Ag,\varepsilon)$-alternating approximate bisimilarity in terms of relations $H^{\varepsilon}_{A\!g}$ and $E^{\varepsilon}_{A\!g}$ defined in the previous section.
Similar method has been adopted to provide the modal characterization of $\lambda$-bisimilarity~\cite{zhang2007modal}.
In order to obtain such modal characterization, a number of auxiliary lemmas are needed.

Firstly, we intend to demonstrate that for any alternating approximately bisimilar states $q_1$ (of $T_1$) and $q_2$ (of $T_2$), given a strategy of $T_1$, there exists a strategy of $T_2$ such that, under control of these strategies, each trace starting from $q_2$ is approximately bisimilar to some trace starting from $q_1$.
To prove this conclusion, we need k$\ddot{o}$nig's lemma (see~\cite{konig1936}), which says
\begin{center}
every infinite, finite branching tree has an infinite branch.
\end{center}

\begin{lemma}\label{lem:characterization 1}
Let $T_i=(S_i,\mathbb{P},\Omega,\Pi_i,\hbar_i)$ be two finite branching alternating transition systems $(i=1,2)$.
Suppose that $d$ is a metric over $\mathbb{P}$, $\varepsilon\in\mathbb{R}^0_+$,
$Ag\subseteq\Omega$ and $F_{A\!g}:(S_1)^+\rightarrow 2^{S_1}$ is a strategy of $Ag$.
For any $q_1\in S_1$ and $q_2\in S_2$ with $q_1\sim_{A\!g}^\varepsilon q_2$, there is a strategy $F'_{A\!g}:(S_2)^+\rightarrow 2^{S_2}$ such that for any $\sigma_2\in Out(q_2,F'_{A\!g})$, $\sigma_1\sim^{\varepsilon}_{A\!g}\sigma_2$ for some $\sigma_1\in Out(q_1,F_{A\!g})$ \footnote{$\sigma_1\sim^{\varepsilon}_{A\!g}\sigma_2$ if and only if for any $i\in \mathbb{N}$, $\sigma_1[i]\sim^{\varepsilon}_{A\!g}\sigma_2[i]$. Similarly, $s_1\sim^{\varepsilon}_{A\!g} s_2$ if and only if $s_1[i]\sim^{\varepsilon}_{A\!g} s_2[i]$ for any $i\leq \max\{|s_1|,|s_2|\}$.}.
\end{lemma}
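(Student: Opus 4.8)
The plan is to construct the required strategy $F'_{A\!g}$ on $T_2$ by recursion on the length of histories, making it locally imitate the $Ag$-choice that $F_{A\!g}$ prescribes at a $\sim^{\varepsilon}_{A\!g}$-matched history of $T_1$, and then to glue the resulting level-by-level matches into a single infinite trace by invoking k\"onig's lemma. The engine of the whole argument is the forth condition (2) of Theorem~\ref{th:approximate bisi}: whenever $p_1\sim^{\varepsilon}_{A\!g}p_2$ and $Q_1\in\hbar_1(p_1,Ag)$ is the set that $F_{A\!g}$ selects at $p_1$, the theorem supplies a set $Q_2\in\hbar_2(p_2,Ag)$ such that every adversary response $Q'_2\in\hbar_2(p_2,\Omega-Ag)$ in $T_2$ is answered by some adversary response $Q'_1\in\hbar_1(p_1,\Omega-Ag)$ in $T_1$ with $(Q_1\cap Q'_1)\times(Q_2\cap Q'_2)\subseteq\sim^{\varepsilon}_{A\!g}$. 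Setting $F'_{A\!g}$ to return exactly this $Q_2$ is what forces each one-step $F'_{A\!g}$-successor in $T_2$ to be $\sim^{\varepsilon}_{A\!g}$-matched by an $F_{A\!g}$-successor in $T_1$.

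Concretely, I would define $F'_{A\!g}$ together with a correspondence between outcome histories of the two systems. In the base step $q_2$ is matched with $q_1$, and $q_1\sim^{\varepsilon}_{A\!g}q_2$ holds by hypothesis. In the inductive step, given an outcome history $s_2\in Out^{n}(q_2,F'_{A\!g})$ matched by $s_1\in Out^{n}(q_1,F_{A\!g})$ with $s_1\sim^{\varepsilon}_{A\!g}s_2$, I put $Q_1:=F_{A\!g}(s_1)$, which lies in $\hbar_1(s_1[end],Ag)$ by Lemma~\ref{lem:stra_Ag}, apply Theorem~\ref{th:approximate bisi}(2) to $s_1[end]\sim^{\varepsilon}_{A\!g}s_2[end]$ to obtain $Q_2$, and declare $F'_{A\!g}(s_2):=Q_2$. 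For each $F'_{A\!g}$-successor $a$ of $s_2$, i.e.\ each $a$ with $Q_2\cap Q'_2=\{a\}$ for some $Q'_2\in\hbar_2(s_2[end],\Omega-Ag)$, the theorem yields $Q'_1$ with $(Q_1\cap Q'_1)\times(Q_2\cap Q'_2)\subseteq\sim^{\varepsilon}_{A\!g}$; since both intersections are singletons (Definition~\ref{def:ATS} and the footnote to Definition~\ref{def:bisim}), the state $t_1$ with $Q_1\cap Q'_1=\{t_1\}$ satisfies $t_1\sim^{\varepsilon}_{A\!g}a$, and by Lemma~\ref{lem:outcome+1} the extension $s_1t_1$ lies in $Out^{n+1}(q_1,F_{A\!g})$, so it matches $s_2a$. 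Extending $F'_{A\!g}$ arbitrarily (any member of $\hbar_2(s[end],Ag)$) on the remaining histories makes it a total strategy. The outcome of this recursion is the finite-level matching property: for every $n$ and every $s_2\in Out^{n}(q_2,F'_{A\!g})$ there is some $s_1\in Out^{n}(q_1,F_{A\!g})$ with $s_1\sim^{\varepsilon}_{A\!g}s_2$ in the pointwise sense.

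It remains to pass from these finite matches to an infinite one, and this is where k\"onig's lemma enters. Fixing $\sigma_2\in Out(q_2,F'_{A\!g})$, I form the tree $\mathcal{T}$ whose depth-$n$ nodes are the length-$n$ matches of the prefix $\sigma_2[1,n]$, ordered by the prefix relation; that a length-$(n{+}1)$ match restricts to a length-$n$ match follows from Lemma~\ref{lem:outcome+1} together with the pointwise reading of $\sim^{\varepsilon}_{A\!g}$ on sequences. The finite-level matching property makes $\mathcal{T}$ non-empty at every depth, hence infinite, while finite branching of $T_1$ bounds the number of $F_{A\!g}$-successors of any history and so makes $\mathcal{T}$ finitely branching. k\"onig's lemma then produces an infinite branch $\sigma_1$, which by construction lies in $Out(q_1,F_{A\!g})$ and satisfies $\sigma_1[i]\sim^{\varepsilon}_{A\!g}\sigma_2[i]$ for every $i$, i.e.\ $\sigma_1\sim^{\varepsilon}_{A\!g}\sigma_2$, as required.

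The subtle point throughout is that $\sim^{\varepsilon}_{A\!g}$ is not transitive (Example~\ref{ex:not equivalent}), so a match can never be recognized from the last state alone; entire histories must be carried along, which is exactly why the recursion tracks the full matched prefix and why the concluding argument works with the tree $\mathcal{T}$ of history-matches rather than with individual states. I expect the main obstacle to be this finite-to-infinite passage: the level-wise matches must be assembled into one coherent infinite trace, and finite branching is indispensable precisely because it is the hypothesis that renders $\mathcal{T}$ finitely branching and thus amenable to k\"onig's lemma. The remaining verifications, namely that the selected intersections really are singletons and that the constructed extensions stay inside the appropriate $Out^{n}$ sets, are routine applications of Definition~\ref{def:ATS}, Definition~\ref{def:outcome} and Lemma~\ref{lem:outcome+1}.
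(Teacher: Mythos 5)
Your proposal is correct and follows essentially the same route as the paper's proof: an inductive construction of $F'_{A\!g}$ on history length via the forth condition of Theorem~\ref{th:approximate bisi}, a finite-level matching invariant, and a final application of K\"onig's lemma to the tree of matching prefixes in $T_1$. Indeed, your observation that it is finite branching of $T_1$ (not $T_2$) that makes this tree finitely branching is more precise than the paper's own wording at that step.
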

\begin{proof}
Let $q_1\in S_1$, $q_2\in S_2$ and $q_1\sim_{A\!g}^\varepsilon q_2$. To obtain the desired strategy $F'_{A\!g}:(S_2)^+\rightarrow 2^{S_2}$, we define subsets $\Delta_n$ of $(S_2)^n$ and functions $F_n:\Delta_n\rightarrow 2^{S_2} (n\in\mathbb{N})$ by induction on $n$ as follows.

We set $\Delta_1\triangleq \{q_2\}$.
Since $F_{A\!g}$ is a strategy of $Ag$, we get $F_{A\!g}(q_1)\in\hbar_1(q_1,Ag)$.
Then by $q_1\sim_{A\!g}^\varepsilon q_2$ and Theorem~\ref{th:approximate bisi}, there exists $Q_2\in\hbar_2(q_2,Ag)$ such that for any $Q'_2\in\hbar_2(q_2,\Omega-Ag)$, $(Q'_1\cap F_{A\!g}(q_1))\times (Q_2\cap Q'_2)\subseteq \sim_{A\!g}^\varepsilon$ for some $Q'_1\in\hbar_1(q_1,\Omega-Ag)$.
Note that such $Q_2$ may not be unique.
Choose  and fix an arbitrary such $Q_2$ and set $F_1(q_2)\triangleq Q_2$.
Clearly, $F_1$ is a function from $\Delta_1$ to $2^{S_2}$.

Suppose that $\Delta_k$ and $F_k$ have been defined. We define $\Delta_{k+1}$ and $F_{k+1}$ below.
We set
\begin{eqnarray*}
\Delta_{k+1}\triangleq\{s_2 q'_2 :s_2 \in \Delta_k \textrm{ and } q'_2\in F_k(s_2)\cap Q'_2 \textrm{ for some } Q'_2\in \hbar_2(s_2[end],\Omega-Ag)\}.
\end{eqnarray*}

For any $s'_2\in\Delta_{k+1}$, if there does not exist $s'_1\in Out^{k+1}(q_1,F_{A\!g})$ such that $s'_1\sim^\varepsilon_{A\!g}s'_2$, then we set $F_{k+1}(s'_2)=S_2$; if there exists $s'_1\in Out^{k+1}(q_1,F_{A\!g})$ such that $s'_1\sim^\varepsilon_{A\!g}s'_2$, then by Theorem~\ref{th:approximate bisi}, we have

\begin{equation*}
\begin{aligned}
\exists Q_2\in\hbar_2(s'_2[end],Ag)(&\forall Q'_2\in\hbar_2(s'_2[end],\Omega-Ag)\exists Q'_1\in\hbar_1(s'_1[end],\Omega-Ag)
\\
&((Q'_1\cap F_{A\!g}(s'_1))\times (Q_2\cap Q'_2)\subseteq \sim^{\varepsilon}_{A\!g})).
\end{aligned}
\end{equation*}

We choose such a $Q_2$ and set $F_{k+1}(s'_2)\triangleq Q_2$.
Clearly, $F_{k+1}$  is a function from $\Delta_{k+1}$ to $2^{S_2}$.

On the other hand, by Lemma~\ref{lem:stra_Ag}, there exists at least one strategy of $Ag$ mapping $(S_2)^+$ to $2^{S_2}$.
Let $F''_{A\!g}:(S_2)^+\rightarrow 2^{S_2}$ be an arbitrary strategy of $Ag$.
Define the function $F'_{A\!g}:(S_2)^+\rightarrow 2^{S_2}$ as follows: for any $s\in(S_2)^+$,
\begin{equation*}
F'_{A\!g}(s)=\left\{\begin{aligned}& F_{|s|}(s)   &\textrm{  if $s\in\Delta_{|s|}$}\\
& F''_{A\!g}(s)   &\textrm{otherwise}
\end{aligned}\right.
\end{equation*}

Next we want to show that the function $F'_{A\!g}$ is the desired strategy. It is enough to prove three claims below.

\textbf{Claim 1}
For any $n\in\mathbb{N}$, the following conclusions hold:

$(1_n)$ $\Delta_n\neq \emptyset$;

$(2_n)$ for any $s_2\in \Delta_n$, $s_1\sim^\varepsilon_{A\!g}s_2$ for some $s_1\in Out^n(q_1,F_{A\!g})$;

$(3_n)$ for any $s_2\in \Delta_n$, $F_n(s_2)\in \hbar_2(s_2[end],Ag)$.

We proceed by induction on $n$.

If $n=1$, then $(1_n)$ and $(2_n)$ hold trivially. By the definition of $F_1$, we obtain $F_1(q_2)\in\hbar_2(q_2, Ag)$. Thus ($3_n$) holds due to $\Delta_1=\{q_2\}$.

Suppose that $(1_k)$, $(2_k)$ and $(3_k)$ hold. We prove $(1_{k+1})$, $(2_{k+1})$ and $(3_{k+1})$ in turn.

$(1_{k+1})$ By induction hypothesis, we get $\Delta_k\neq \emptyset$ and $F_k(s)\in\hbar_2(s[end],Ag)$ for any $s\in\Delta_k$.
Let $s_2\in\Delta_k$ and $Q'_2\in \hbar_2(s_2[end],\Omega-Ag)$.
By Definition~\ref{def:ATS}, $F_k(s_2)\cap Q'_2$ is a singleton set.
That is, there exists $q'_2\in S_2$ such that $F_k(s_2)\cap Q'_2 =\{q'_2\}$.
Therefore, it follows from the definition of $\Delta_{k+1}$ that $s_2 q'_2\in\Delta_{k+1}$.
Thus $\Delta_{k+1}\neq \emptyset$.

$(2_{k+1})$ Let $s'_2\in \Delta_{k+1}$. By the construction of $\Delta_{k+1}$, there exists $s_2\in\Delta_k$, $Q'_2\in\hbar(s_2[end],\Omega-Ag)$ and $q'_2\in F_k(s_2)\cap Q'_2$ such that $s'_2 =s_2 q'_2$.
By induction hypothesis,  $s_1\sim^\varepsilon_{A\!g} s_2$ for some $s_1\in Out^{k}(q_1,F_{A\!g})$.
Then by the construction of $F_k$, there exists $Q'_1\in\hbar_1(s_1[end],\Omega-Ag)$ such that
\begin{equation*}
(Q'_1\cap F_{A\!g}(s_1))\times (F_k(s_2)\cap Q'_2)\subseteq \sim^\varepsilon_{A\!g}.
\end{equation*}
According to Definition~\ref{def:ATS}, both $Q'_1\cap F_{A\!g}(s_1)$ and $F_k(s_2)\cap Q'_2$ are singleton sets.
Thus there exists $q'_1\in S_1$ such that $Q'_1\cap F_{A\!g}(s_1)=\{q'_1\}$ and $q'_1\sim^\varepsilon_{A\!g} q'_2$.
Then it follows that $s_1q'_1\sim^\varepsilon_{A\!g} s'_2$.
Moreover, by Lemma~\ref{lem:outcome+1}, we have $s_1 q'_1\in Out^{k+1}(q_1,F_{A\!g})$, as desired.

$(3_{k+1})$ Let $s'_2\in \Delta_{k+1}$. It follows from $(2_{k+1})$ that $s'_1\sim^\varepsilon_{A\!g}s'_2$ for some $s'_1\in Out^{k+1}(q_1,F_{A\!g})$.
Then by the definition of $F_{k+1}$, we get $F_{k+1}(s'_2)\in\hbar_2(s'_2[end],Ag)$.
\\

\textbf{Claim 2}
$F'_{A\!g}$ is a strategy of $Ag$ and for any $n\in\mathbb{N}$, $\Delta_n=Out^{n}(q_2,F'_{A\!g})$.

By Claim 1 and the definition of $F'_{A\!g}$, $F'_{A\!g}(s_2)\in \hbar_2(s_2[end], Ag)$ for any $s_2\in (S_2)^+$.
Thus by Lemma~\ref{lem:stra_Ag}, $F'_{A\!g}$ is a strategy of $Ag$. Next we prove that for any $n\in\mathbb{N}$, $\Delta_n=Out^{n}(q_2,F'_{A\!g})$.

If $n=1$ then $\Delta_n=\{q_2\}=Out^n(q_2,F'_{A\!g})$ holds trivially.

Suppose that $n=k+1$ and $\Delta_k=Out^k(q_2, F'_{A\!g})$.
Then by the definition of $F'_{A\!g}$, $F'_{A\!g}(s_2)=F_k(s_2)$ for any $s_2\in\Delta_k$.
Further, since $\Delta_k=Out^k(q_2,F'_{A\!g})$, by the definition of $\Delta_{k+1}$ and Lemma~\ref{lem:outcome+1}, it is easy to check that $\Delta_{k+1}=Out^{k+1}(q_2,F'_{A\!g})$.
\\

\textbf{Claim 3}
For any $\sigma_2\in Out(q_2,F'_{A\!g})$, there exists $\sigma_1\in Out(q_1,F_{A\!g})$ such that $\sigma_1\sim^\varepsilon_{A\!g} \sigma_2$.

Suppose that $\sigma_2\in Out(q_2,F'_{A\!g})$.
By Definition~\ref{def:outcome} and Claim 2, for any $n\in\mathbb{N}$, $\sigma_2[1,n]\in Out^{n}(q_2,F'_{A\!g})=\Delta_{n}$.
In order to demonstrate the existence of the desired $\sigma_1\in Out(q_1,F_{A\!g})$, we construct the tree
\begin{center}
$<\bigcup_{n\in\mathbb{N}}\{s_n\in Out^n(q_1,F_{A\!g}):s_n\sim_{A\!g}^\varepsilon \sigma_2[1,n]\}, R,q_1>$
\end{center}
 as:

$(a)$ $q_1$ is the root;

$(b)$ for any $s,s'\in \bigcup_{n\in\mathbb{N}}\{s_n\in Out^n(q_1,F_{A\!g}):s_n\sim_{A\!g}^\varepsilon \sigma_2[1,n]\}$, $sRs'$ if and only if $|s'|=|s|+1$ and $s'[1,|s|]=s$.

By Claim 1, for any $n\in\mathbb{N}$, it follows from $\sigma_2[1,n]\in\Delta_n$ that $\{s_n\in Out^n(q_1,F_{A\!g}):s_n\sim_{A\!g}^\varepsilon \sigma_2[1,n]\}$ is non-empty.
Thus such tree  is infinite.
On the other hand, since $T_2$ is finite branching, it is easy to see that this tree is finite branching.
Therefore, by K$\ddot{o}$nig's lemma, there exists an infinite branch of this tree.
Suppose that this branch is $s_1(=q_1),s_2,s_3,\cdots$.
We define $\sigma_1$ as $\sigma_1[i]\triangleq s_i[end]$ for each $i\in\mathbb{N}$.
By the construction of this tree, we have $\sigma_1\in Out(q_1,F_{A\!g})$ and $\sigma_1\sim^\varepsilon_{A\!g} \sigma_2$.
\qed\end{proof}

Due to symmetry of $\sim^\varepsilon_{A\!g}$, the clauses (1-$b$) and (2-$b$) in the lemma below are redundant.
However, to prove (1-$a$) and (2-$a$) by induction, we need to use induction hypothesis on (1-$b$) and (2-$b$), thus, they are also listed explicitly  in the lemma.
\begin{lemma}\label{lem:characterization 2}
Let $T_i=(S_i,\mathbb{P},\Omega,\Pi_i,\hbar_i)$ be two finite branching alternating transition systems $(i=1,2)$.
Suppose that $d$ is a metric over $\mathbb{P}$, $\varepsilon\in\mathbb{R}^0_+$ and
$Ag\subseteq\Omega$.
For any $q_1\in S_1$, $q_2\in S_2$, $\sigma_1\in(S_1)^{\omega}$ and $\sigma_2\in(S_2)^{\omega}$,

(1) if $q_1\sim^\varepsilon_{A\!g} q_2$, then for any $(\varphi,\gamma)\in H^{\varepsilon}_{A\!g}$,

\ \ \ \ (1-a) $(T_1,d),q_1\models\varphi\Rightarrow (T_2,d), q_2\models \gamma$,

\ \ \ \ (1-b) $(T_2,d),q_2\models\varphi\Rightarrow (T_1,d), q_1\models \gamma$;

(2) if $\sigma_1 \sim^\varepsilon_{A\!g} \sigma_2$, then for any $(\psi,\phi)\in E^{\varepsilon}_{A\!g}$,

\ \ \ \ (2-a) $(T_1,d),\sigma_1 \models\psi\Rightarrow (T_2,d), \sigma_2 \models \phi$,

\ \ \ \ (2-b) $(T_2,d),\sigma_2 \models\psi\Rightarrow (T_1,d), \sigma_1\models \phi$.
\end{lemma}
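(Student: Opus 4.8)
The plan is to prove the four implications (1-a), (1-b), (2-a), (2-b) simultaneously by induction on the ranks $\xi_s(\varphi)$ and $\xi_p(\psi)$, exactly mirroring the structure used in Proposition~\ref{pro:basic properties of H and E}. The reason all four must be bundled together is already flagged in the remark preceding the lemma: although $\sim^\varepsilon_{A\!g}$ is symmetric as a relation, the relations $H^\varepsilon_{A\!g}$ and $E^\varepsilon_{A\!g}$ are \emph{not} symmetric (as noted after Lemma~\ref{lem:properties of H and E}), so the ``back'' direction (1-b) is genuinely needed as a separate induction hypothesis to handle negation, and likewise (2-b) for path formulas. The base case $\xi_s(\varphi)=\xi_p(\psi)=0$ is vacuous, and the inductive step proceeds by case analysis on the outermost connective of $\varphi$ and of $\psi$, using Lemma~\ref{lem:properties of H and E} to read off the shape of the paired formula $\gamma$ (respectively $\phi$) and the rank bookkeeping.

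\textbf{The state-formula cases (parts 1-a, 1-b).} For $\varphi=p$, Lemma~\ref{lem:properties of H and E}(c) gives $\gamma=\langle\varepsilon\rangle p$; from $q_1\sim^\varepsilon_{A\!g}q_2$ and clause (1) of Definition~\ref{def:approximate bisimi} we have $d(\Pi_1(q_1),\Pi_2(q_2))\le\varepsilon$, so $(T_1,d),q_1\models p$ forces $\Pi_1(q_1)=p$ and hence $d(p,\Pi_2(q_2))\le\varepsilon$, i.e.\ $(T_2,d),q_2\models\langle\varepsilon\rangle p$. The conjunction case uses Lemma~\ref{lem:properties of H and E}(e) together with the induction hypothesis on the smaller-rank pairs $(\varphi_i,\gamma_i)$. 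The negation case is where the two directions interlock: by Lemma~\ref{lem:properties of H and E}(d), $\varphi=\neg\gamma_1$ with $(\varphi_1,\gamma_1)\in H^\varepsilon_{A\!g}$ and $\gamma=\neg\varphi_1$; to prove (1-a) for $\varphi$ one contraposes and invokes the induction hypothesis (1-b) applied to the pair $(\varphi_1,\gamma_1)$, which is exactly why (1-b) cannot be dropped. The genuinely substantive case is $\varphi=\langle\!\langle Ag\rangle\!\rangle\psi_1$: here Lemma~\ref{lem:characterization 1} supplies, for the strategy $F_{A\!g}$ witnessing $(T_1,d),q_1\models\varphi$, a matching strategy $F'_{A\!g}$ on $T_2$ such that every $\sigma_2\in Out(q_2,F'_{A\!g})$ is $\sim^\varepsilon_{A\!g}$-related to some $\sigma_1\in Out(q_1,F_{A\!g})$; feeding each such pair $(\sigma_1,\sigma_2)$ into the path-level induction hypothesis (2-a) for $(\psi_1,\phi_1)$ converts $(T_1,d),\sigma_1\models\psi_1$ into $(T_2,d),\sigma_2\models\phi_1$, which by Definition~\ref{def:semantic of ATLe} yields $(T_2,d),q_2\models\langle\!\langle Ag\rangle\!\rangle\phi_1=\gamma$.

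\textbf{The path-formula cases (parts 2-a, 2-b).} Given $\sigma_1\sim^\varepsilon_{A\!g}\sigma_2$, meaning $\sigma_1[i]\sim^\varepsilon_{A\!g}\sigma_2[i]$ for all $i$ (per the footnote to Lemma~\ref{lem:characterization 1}), the cases run along the shape of $\psi$ read off from clauses (g)--(k) of Lemma~\ref{lem:properties of H and E}. When $\psi$ is itself a state formula, clause (g) puts $(\psi,\phi)\in H^\varepsilon_{A\!g}$ and, since $\sigma_1[1]\sim^\varepsilon_{A\!g}\sigma_2[1]$, the state-level result (1-a) applies directly. The temporal cases $\mathbf{X}\psi_1$ and $\psi_1\mathbf{U}\psi_2$ follow by applying the induction hypothesis to the appropriate suffixes $\sigma_1[2,\infty],\sigma_2[2,\infty]$ (resp.\ $\sigma_1[i,\infty],\sigma_2[i,\infty]$), noting that these suffixes remain pointwise $\sim^\varepsilon_{A\!g}$-related, and then reassembling via Definition~\ref{def:semantic of ATLe}; the negation case again crosses over to (2-b) by contraposition.

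\textbf{Main obstacle.} The only real content beyond routine structural bookkeeping is the path-quantifier case, and its difficulty is entirely absorbed into the already-proved Lemma~\ref{lem:characterization 1} (which itself rests on K\"onig's lemma and the finite-branching hypothesis). Thus the chief care required here is threefold: keeping the fourfold induction hypothesis organized so that each direction calls the \emph{opposite} direction through negation and the path/state levels through the quantifier, verifying that the rank equalities in Lemma~\ref{lem:properties of H and E} really do make every appeal to the induction hypothesis land on a strictly smaller rank, and checking that the pointwise $\sim^\varepsilon_{A\!g}$-relation on infinite sequences is inherited by suffixes. I expect no case to present a genuine mathematical surprise, so the write-up would, as in the preceding proofs, present the proposition $\varphi=p$ and $\varphi=\langle\!\langle Ag\rangle\!\rangle\psi_1$ cases (plus one $\mathbf{U}$ case) as representative and relegate the remainder to ``the other cases are similar.''
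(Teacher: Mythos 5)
Your proposal is correct and follows essentially the same route as the paper's own proof: a simultaneous induction on $\xi_s$ and $\xi_p$ over all four implications, with the negation cases crossing over to the opposite direction, the $\langle\!\langle Ag\rangle\!\rangle$ case discharged by Lemma~\ref{lem:characterization 1}, and the path-formula cases reduced to suffixes or to the state-level claim. The only cosmetic difference is that you invoke clause (1) of Definition~\ref{def:approximate bisimi} for the atomic case where the paper cites Theorem~\ref{th:approximate bisi}, which is the same condition.
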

\begin{proof}
We prove (1) and (2) simultaneously by induction on the ranks of $\varphi$ and $\psi$.

By Definition~\ref{def:complexity of formula}, $\xi_s(\varphi)>0$ and $\xi_p(\psi)>0$ for any state formula $\varphi$ and path formula $\psi$.
So it is clear that if $\xi_s(\varphi)=0$ and $\xi_p(\psi)=0$ then (1) and (2) hold.

Suppose that $(\varphi,\gamma)\in H^{\varepsilon}_{A\!g}$, $(\psi,\phi)\in E^{\varepsilon}_{A\!g}$, $\xi_{s}(\varphi)=\xi_p(\psi)=n+1$ and the conclusions (1) and (2) hold for any $(\varphi_0,\gamma_0)\in H^{\varepsilon}_{A\!g}$ and $(\psi_0,\phi_0)\in E^{\varepsilon}_{A\!g}$ with $\xi_{s}(\varphi_0)=\xi_p(\psi_0)\leq n$.

\textbf{(1-$a$)} By Lemma~\ref{lem:properties of H and E}, $\varphi$ is in one of the following forms: $p$, $\neg\gamma_1, \varphi_1\wedge\varphi_2$, and $\langle\!\langle Ag\rangle\!\rangle\psi_1$.
The argument is split into four cases based on the form of $\varphi$.
In the following, we just consider some sample cases.

 \textbf{Case 1.1} $\varphi=p$ for some $p\in\mathbb{P}$.
Suppose that $q_1\sim^\varepsilon_{A\!g} q_2$ and $(T_1,d),q_1\models\varphi$.
By Lemma~\ref{lem:properties of H and E}, we have $\gamma=\langle\varepsilon\rangle p$.
Then, since $(T_1,d),q_1\models\varphi$ and $q_1\sim^\varepsilon_{A\!g} q_2$, it follows from Definition~\ref{def:semantic of ATLe} and Theorem~\ref{th:approximate bisi} that $\Pi_1(q_1)=p$ and $d(p,\Pi_2(q_2))\leq \varepsilon$.
Further, by  Definition~\ref{def:semantic of ATLe}, we get $(T_2,d), q_2\models \gamma$.

\textbf{Case 1.2} $\varphi=\neg\gamma_0$. Suppose that $q_1\sim^\varepsilon_{A\!g} q_2$ and $(T_1,d),q_1\models\varphi$.
By Lemma~\ref{lem:properties of H and E}, there exists a state formula $\varphi_0$ such that $(\varphi_0,\gamma_0)\in H^{\varepsilon}_{A\!g}$, $\gamma=\neg\varphi_0$ and $\xi_s(\varphi)=\xi_s(\gamma)=\xi_s(\gamma_0)+1$.
It follows from $(T_1,d),q_1\models\varphi$ and Definition~\ref{def:semantic of ATLe} that $(T_1,d),q_1\not\models\gamma_0$.
Further, by induction hypothesis on (1-$b$), we obtain $(T_2,d),q_2\not\models\varphi_0$. Therefore, $(T_2,d),q_2\models\gamma$.

\textbf{Case 1.3} $\varphi=\langle\!\langle Ag\rangle\!\rangle \psi$. Suppose that $q_1\sim^\varepsilon_{A\!g} q_2$ and $(T_1,d),q_1\models\varphi$.
It follows from Lemma~\ref{lem:properties of H and E} that there exists a path formula $\phi$ such that $(\psi,\phi)\in E^{\varepsilon}_{A\!g}$, $\gamma=\langle\!\langle Ag\rangle\!\rangle \phi$ and $\xi_s(\varphi)=\xi_s(\gamma)=\xi_p(\psi)+1$.
Due to $(T_1,d),q_1\models\langle\!\langle Ag\rangle\!\rangle \psi$ and Definition~\ref{def:semantic of ATLe}, there is a strategy $F_{A\!g}$ of $Ag$ such that for any $\sigma\in Out(q_1,F_{A\!g})$, $(T_1,d),\sigma\models\psi$.
It follows from $q_1\sim^\varepsilon_{A\!g} q_2$ and Lemma~\ref{lem:characterization 1} that there exists a strategy $F'_{A\!g}:(S_2)^+\rightarrow 2^{S_2}$  such that
\begin{center}
for any $\sigma'\in Out(q_2,F'_{A\!g})$, $\sigma\sim^\varepsilon_{A\!g}\sigma'$ for some $\sigma\in Out(q_1,F_{A\!g})$.
\end{center}
Then since $(T_1,d),\sigma\models\psi$ for any $\sigma\in Out(q_1,F_{A\!g})$, by induction hypothesis on (2-$a$),
$(T_2,d),\sigma'\models\phi$ for any $\sigma'\in Out(q_2,F'_{A\!g})$.
So we have $(T_2,d),q_2\models\gamma$.

\textbf{(2-$a$)} By Lemma~\ref{lem:properties of H and E} again, $\psi$ is in the form of $\varphi_1$, $\neg\phi_1, \psi_1\wedge\psi_2$, $\mathbf{X}\psi_1$ or $\psi_1\mathbf{U}\psi_2$.
We distinguish five cases based on the form of $\psi$ and just consider some sample cases below.

\textbf{Case 2.1}
$\psi$ is a state formula.
Suppose that $\sigma_1\sim^\varepsilon_{A\!g}\sigma_2$ and $(T_1,d),\sigma_1\models\psi$.
By Lemma~\ref{lem:properties of H and E}, $(\psi,\phi)\in H^{\varepsilon}_{A\!g}$ and $\xi_p(\psi)=\xi_p(\phi)=\xi_s(\psi)+1$.
By Definition~\ref{def:semantic of ATLe}, we have $(T_1,d),\sigma_1[1]\models\psi_1$.
So by induction hypothesis on (1-$a$), $(T_2,d),\sigma_2[1]\models\phi$.
Further, it follows from Definition~\ref{def:semantic of ATLe} that $(T_2,d),\sigma_2\models\phi$.

\textbf{Case 2.2}
$\psi=\psi_1\mathbf{U}\psi_2$.
Suppose that $\sigma_1\sim^\varepsilon_{A\!g}\sigma_2$ and $(T_1,d),\sigma_1\models\psi$.
By Lemma~\ref{lem:properties of H and E}, there exist path formulas $\phi_1$ and $\phi_2$ such that $(\psi_i,\phi_i)\in E^{\varepsilon}_{A\!g} (i=1,2)$, $\phi=\phi_1\mathbf{U}\phi_2$ and $\xi_p(\phi)=\xi_p(\psi)=\max\{\xi_p(\psi_1),\xi_p(\psi_2)\}+1$.
Due to $(T_1,d),\sigma_1\models\psi_1\mathbf{U}\psi_2$ and Definition~\ref{def:semantic of ATLe}, there exists $i\in\mathbb{N}$ such that
\begin{center}
$(T_1,d),\sigma_1[i,\infty]\models\psi_2$ and for any $j<i$, $(T_1,d),\sigma_1[j,\infty]\models\psi_1$.
\end{center}
By induction hypothesis on (2-$a$), it follows that $(T_2,d),\sigma_2[i,\infty]\models\phi_2$ and for any $j<i$, $(T_2,d),\sigma_2[j,\infty]\models\phi_1$.
Then by Definition~\ref{def:semantic of ATLe}, we obtain $(T_2,d),\sigma_2\models\phi$.
\qed\end{proof}

\begin{lemma}\label{lem:characterization 3}
Let $T_i=(S_i,\mathbb{P},\Omega,\Pi_i,\hbar_i)$ be two finite branching alternating transition systems $(i=1,2)$.
Suppose that $d$ is a metric over $\mathbb{P}$, $\varepsilon\in\mathbb{R}^0_+$ and
$Ag\subseteq\Omega$.
For any $q_1\in S_1$ and $q_2\in S_2$,  $q_1\sim^\varepsilon_{A\!g} q_2$ if they satisfy the following two conditions:

(1) for any $(\varphi,\gamma)\in H^\varepsilon_{A\!g}$, $(T_1,d),q_1\models\varphi\Rightarrow(T_2,d),q_2\models\gamma$,

(2) for any $(\varphi,\gamma)\in H^\varepsilon_{A\!g}$, $(T_2,d),q_2\models\varphi\Rightarrow(T_1,d),q_1\models\gamma$.
\end{lemma}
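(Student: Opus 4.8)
The plan is to exhibit the relation defined by the hypotheses as a bisimulation. Concretely, set
$R \triangleq \{(q_1,q_2)\in S_1\times S_2 : q_1 \text{ and } q_2 \text{ satisfy (1) and (2)}\}$
and prove that $R$ is an $(Ag,\varepsilon)$-alternating approximate bisimulation. Once this is done, the assumed $(q_1,q_2)\in R$ together with part (3) of Lemma~\ref{lem:prop of Ag-E} (that $\sim^{\varepsilon}_{A\!g}$ is the largest such bisimulation) yields $q_1\sim^{\varepsilon}_{A\!g}q_2$. So the whole argument reduces to verifying the three clauses of Definition~\ref{def:approximate bisimi} for $R$.

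The observation clause is immediate: for $(q_1,q_2)\in R$, writing $p=\Pi_1(q_1)$ we have $(T_1,d),q_1\models p$, and $(p,\langle\varepsilon\rangle p)\in H^{\varepsilon}_{A\!g}$ by clause (1) of Definition~\ref{def:relation of formula 1}; hence condition (1) of $R$ forces $(T_2,d),q_2\models\langle\varepsilon\rangle p$, which by Definition~\ref{def:semantic of ATLe} means $d(\Pi_1(q_1),\Pi_2(q_2))\leq\varepsilon$.

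The substance lies in the two transfer clauses, which I would establish by contradiction. For clause (2), suppose it fails: there is $Q_1\in\hbar_1(q_1,Ag)$ such that for every $Q_2\in\hbar_2(q_2,Ag)$ there is $Q'_2\in\hbar_2(q_2,\Omega-Ag)$ for which, for all $Q'_1\in\hbar_1(q_1,\Omega-Ag)$, the two singleton successors are not $R$-related. Because both systems are finite branching, the set $P_1$ of side-1 successors obtained from $q_1$ through $Q_1$ (as $Q'_1$ varies) and the set $P_2$ of the chosen ``bad'' side-2 successors (one per $Q_2$) are both finite. For every $a\in P_1$ and $b\in P_2$ we have $(a,b)\notin R$, so by the definition of $R$ there is a witnessing pair in $H^{\varepsilon}_{A\!g}$; using the negation closure (clause (2) of Definition~\ref{def:relation of formula 1}) I would normalize each witness to $(\chi_{ab},\rho_{ab})\in H^{\varepsilon}_{A\!g}$ with $a\models\chi_{ab}$ and $b\not\models\rho_{ab}$. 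These finitely many witnesses are then combined: setting $\chi_a=\bigwedge_b\chi_{ab}$ and $\rho_a=\bigwedge_b\rho_{ab}$ (closure under $\wedge$), and then $\psi_1=\bigvee_a\chi_a$ and $\phi_1=\bigvee_a\rho_a$ (disjunction realized through negation and $\wedge$), gives $(\psi_1,\phi_1)\in H^{\varepsilon}_{A\!g}$ with every $a\in P_1$ satisfying $\psi_1$ and every $b\in P_2$ failing $\phi_1$. Lifting through clauses (5), (8) and (4) of Definition~\ref{def:relation of formula 1} yields $(\langle\!\langle Ag\rangle\!\rangle\mathbf{X}\psi_1,\langle\!\langle Ag\rangle\!\rangle\mathbf{X}\phi_1)\in H^{\varepsilon}_{A\!g}$; and since $q_1\models\langle\!\langle Ag\rangle\!\rangle\mathbf{X}\psi_1$ (take $Q_1$ as the first choice) while $q_2\not\models\langle\!\langle Ag\rangle\!\rangle\mathbf{X}\phi_1$ (whatever first choice $Q_2$ is made, the adversary defeats it via $Q'_2$), this contradicts condition (1) of $R$. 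Clause (3) is symmetric, producing a witness with $q_2\models\langle\!\langle Ag\rangle\!\rangle\mathbf{X}\psi_1$ and $q_1\not\models\langle\!\langle Ag\rangle\!\rangle\mathbf{X}\phi_1$ and contradicting condition (2) of $R$.

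The main obstacle I anticipate is the bookkeeping of the combination step: correctly turning each non-membership $(a,b)\notin R$ — which may arise from either of the two defining implications — into a uniformly oriented witness via negation closure, and then assembling the finitely many witnesses into a single pair of $H^{\varepsilon}_{A\!g}$ whose two components separate $P_1$ from $P_2$ in exactly the required sense. Making the $\forall\exists\forall\exists$ alternation of Definition~\ref{def:approximate bisimi} coincide with the semantics of $\langle\!\langle Ag\rangle\!\rangle\mathbf{X}$ is the delicate point, and the finite-branching hypothesis is precisely what keeps the conjunctions and disjunctions finite and hence legitimate $Ag$-ATL$_{\varepsilon}$ formulas. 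Unlike Lemma~\ref{lem:characterization 1}, no appeal to K\"onig's lemma is needed here, since the bisimulation conditions are one-step and only the $\mathbf{X}$ modality is involved.
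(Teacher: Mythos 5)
Your proposal is correct and takes essentially the same route as the paper's own proof: the same relation $R$, the same argument by contradiction that $R$ is an $(Ag,\varepsilon)$-alternating approximate bisimulation, the same normalization of witnesses of $(a,b)\notin R$ via the negation closure of $H^{\varepsilon}_{A\!g}$, and the same assembly of finitely many witnesses (legitimate by finite branching) into a conjunction/disjunction lifted through a one-step path-quantified formula. The only cosmetic difference is that you reach the contradiction positively --- $q_1\models\langle\!\langle Ag\rangle\!\rangle\mathbf{X}\psi_1$ while $q_2\not\models\langle\!\langle Ag\rangle\!\rangle\mathbf{X}\phi_1$, against condition (1) of $R$ --- whereas the paper works with the dual formula $\neg\langle\!\langle Ag\rangle\!\rangle\neg\mathbf{X}(\cdot)$, transfers it from $q_2$ to $q_1$ via condition (2), and then contradicts the witness properties.
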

\begin{proof}
Set
\begin{equation*}
R=\{(q_1,q_2):q_1\in S_1 \textrm{ and }q_2\in S_2 \textrm{ satisfy the above conditions (1) and (2)}\}.
\end{equation*}
To complete the proof, it is enough to show that $R$ is an $(Ag,\varepsilon)$-alternating approximate bisimulation.
Suppose that $R$ is not.
Since each pair in $R$ satisfies the condition (1) in Definition~\ref{def:approximate bisimi}, there exists $(q_1,q_2)\in R$ satisfying one of the following conditions:

(i) $\exists Q_1\in\hbar_1(q_1,Ag)\forall Q_2\in\hbar_2(q_2,Ag) \exists Q'_2\in \hbar_2(q_2,\Omega-Ag) \forall Q'_1\in\hbar_1(q_1,\Omega-Ag)((Q_1\cap Q'_1)\times (Q_2\cap Q'_2)\not\subseteq R)$,

(ii) $\exists Q_2\in\hbar_2(q_2,Ag)\forall Q_1\in\hbar_1(q_1,Ag) \exists Q'_1\in \hbar_1(q_1,\Omega-Ag) \forall Q'_2\in\hbar_2(q_2,\Omega-Ag)((Q_1\cap Q'_1)\times (Q_2\cap Q'_2)\not\subseteq R)$.

W.l.o.g, suppose that (i) holds.
It follows from  (i) and Definition~\ref{def:ATS} that for any $Q_2\in\hbar_2(q_2,Ag)$, there exists $Q'_2\in\hbar_2(q_2,\Omega-Ag)$ and  $q'_2\in Q_2$ such that $\{q'_2\}=Q_2\cap Q'_2$ and for any $Q'_1\in\hbar(q_1,\Omega-Ag)$,
\begin{center}
$\{q'_1\}=Q_1\cap Q'_1$ and $(q'_1,q'_2)\not\in R$ for unique $q'_1$.
\end{center}

Hence, for each $Q_2\in\hbar_2(q_2,Ag)$ and $Q'_1\in\hbar_1(q_1,\Omega-Ag)$, we can fix a pair of states $q_{Q_2}\in Q_2$ and $q_{Q'_1}\in Q_1$ satisfying the above conditions.
Then for any $Q_2\in\hbar_2(q_2,Ag)$ and $Q'_1\in\hbar_1(q_1,\Omega-Ag)$, it follows from $(q_{Q'_1},q_{Q_2})\not\in R$ that there exists $(\varphi,\gamma)\in H^\varepsilon_{A\!g}$ such that
\begin{center}
(a) $(T_1,d),q_{Q'_1}\models\varphi$ but $(T_2,d),q_{Q_2}\models\neg\gamma$, or

(b)$(T_2,d),q_{Q_2}\models\varphi$ but $(T_1,d),q_{Q'_1}\models\neg\gamma$.
\end{center}

If (a) holds, then $(T_2,d),q_{Q_2}\models\neg\gamma$ but $(T_1,d),q_{Q'_1}\models\neg\neg\varphi$.
Moreover, by Definition~\ref{def:relation of formula 1}, we get $(\neg\gamma,\neg\varphi)\in H^\varepsilon_{A\!g}$.
Thus for any $Q_2\in\hbar_2(q_2,Ag)$ and $Q'_1\in\hbar_1(q_1,\Omega-Ag)$, there exists $(\varphi_{Q'_1,Q_2},\gamma_{Q'_1,Q_2})\in H^\varepsilon_{A\!g}$ such that
\begin{equation}\label{eq:char 3.1}
(T_2,d),q_{Q_2}\models\varphi_{Q'_1,Q_2} \textrm{ but } (T_1,d),q_{Q'_1}\models\neg\gamma_{Q'_1,Q_2}.
\end{equation}

So, for any $Q^*_2\in\hbar_2(q_2,Ag)$,  it follows that
\begin{center}
$(T_2,d),q_{Q^*_2}\models\bigvee_{Q_2\in\hbar_2(q_2,Ag)}\bigwedge_{Q'_1\in\hbar_1(q_1,\Omega-Ag)}\varphi_{Q'_1,Q_2}$\footnote{$\bigvee_{i\in I}\varphi_i$ and $\bigwedge_{i\in I}\varphi_i$ can be defined as usual, where $I$ is a finite index set.}.
\end{center}
By Lemma~\ref{lem:stra_Ag}, for any strategy $F'_{A\!g}:(S_2)^+\rightarrow 2^{S_2}$, we have  $F'_{A\!g}(q_2)\in\hbar_2(q_2,Ag)$ and then it follows from Definition~\ref{def:outcome} that for some $\sigma_2\in Out(q_2,F'_{A\!g})$,
\begin{center}
$\sigma_2[2]= q_{F'_{A\!g}(q_2)}$ and $(T_2,d),\sigma_2[2]\models\bigvee_{Q_2\in\hbar_2(q_2,Ag)}\bigwedge_{Q'_1\in\hbar_1(q_1,\Omega-Ag)}\varphi_{Q'_1,Q_2}$.
\end{center}
Further, by Definition~\ref{def:semantic of ATLe}, we obtain
\begin{center}
$(T_2,d),q_2\models\neg\langle\!\langle Ag\rangle\!\rangle\neg\mathbf{X}\bigvee_{Q_2\in\hbar_2(q_2,Ag)}\bigwedge_{Q'_1\in\hbar_1(q_1,\Omega-Ag)}\varphi_{Q'_1,Q_2}$.
\end{center}

Moreover, it follows from Definition~\ref{def:relation of formula 1} that $(\varphi^*,\gamma^*)\in H_{A\!g}^{\varepsilon}$, where
\begin{center}
$\varphi^*=\neg\langle\!\langle Ag\rangle\!\rangle\neg\mathbf{X}\bigvee_{Q_2\in\hbar_2(q_2,Ag)}\bigwedge_{Q'_1\in\hbar_1(q_1,\Omega-Ag)}\varphi_{Q'_1,Q_2}$
\end{center}
 and
\begin{center}
 $\gamma^*=\neg\langle\!\langle Ag\rangle\!\rangle\neg\mathbf{X}\bigvee_{Q_2\in\hbar_2(q_2,Ag)}\bigwedge_{Q'_1\in\hbar_1(q_1,\Omega-Ag)}\gamma_{Q'_1,Q_2}$.
\end{center}

Hence, due to $(T_2,d),q_2\models\varphi^*$ and $(q_1,q_2)\in R$, we get
\begin{equation}\label{eq:char 3.2}
(T_1,d),q_1\models \gamma^*.
\end{equation}

On the other hand, by Definition~\ref{def:strategy}, it is clear that $F_{A\!g}(q_1)=Q_1\in\hbar_1(q_1,Ag)$ for some strategy $F_{A\!g}:(S_1)^+\rightarrow 2^{S_1}$.
Then by~(\ref{eq:char 3.2}), there exists $\sigma_1\in Out(q_1,F_{A\!g})$ such that
\begin{center}
$(T_1,d),\sigma_1[2]\models \bigvee_{Q_2\in\hbar_2(q_2,Ag)}\bigwedge_{Q'_1\in\hbar_1(q_1,\Omega-Ag)}\gamma_{Q'_1,Q_2}$.
\end{center}

By Definition~\ref{def:outcome}, $\{\sigma_1[2]\}=Q_1\cap Q''_1$ for some $Q''_1\in\hbar_1(q_1,\Omega-Ag)$.
Clearly, $\sigma_1[2]=q_{Q''_1}$.
Then  $(T_1,d),q_{Q''_1}\models \bigwedge_{Q'_1\in\hbar_1(q_1,\Omega-Ag)}\gamma_{Q'_1,Q^*_2}$ for some $Q^*_2\in\hbar_2(q_2,Ag)$.
Thus we have $(T_1,d),q_{Q''_1}\models \gamma_{Q''_1,Q^*_2}$, which contradicts (\ref{eq:char 3.1}).
\qed\end{proof}

Now, we arrive at the main result of this section, which offers a logical characterization of alternating approximate bisimilariy.

\begin{theorem}\label{th:main}
Let $T_i=(S_i,\mathbb{P},\Omega,\Pi_i,\hbar_i)$ be two finite branching alternating transition systems $(i=1,2)$.
Suppose that $d$ is a metric over $\mathbb{P}$, $\varepsilon\in\mathbb{R}^0_+$ and
$Ag\subseteq\Omega$.
For any $q_1\in S_1$ and $q_2\in S_2$, $q_1\sim^\varepsilon_{A\!g} q_2$ if and only if they satisfy the following conditions:

(1) for any $(\varphi,\gamma)\in H^\varepsilon_{A\!g}$, $(T_1,d),q_1\models\varphi\Rightarrow(T_2,d),q_2\models\gamma$,

(2) for any $(\varphi,\gamma)\in H^\varepsilon_{A\!g}$, $(T_2,d),q_2\models\varphi\Rightarrow(T_1,d),q_1\models\gamma$.
\end{theorem}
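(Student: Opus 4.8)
The plan is to prove the two implications of the biconditional separately, observing that each has already been isolated as a dedicated lemma, so that the theorem itself is essentially a synthesis of the preceding groundwork.

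First I would handle the forward direction. Assume $q_1\sim^{\varepsilon}_{A\!g} q_2$. The goal is to derive conditions (1) and (2). I would simply invoke Lemma~\ref{lem:characterization 2}: instantiating that lemma at the pair of states $q_1,q_2$, its clause (1-a) yields that for every $(\varphi,\gamma)\in H^{\varepsilon}_{A\!g}$ we have $(T_1,d),q_1\models\varphi\Rightarrow(T_2,d),q_2\models\gamma$, which is exactly condition (1); its clause (1-b) gives $(T_2,d),q_2\models\varphi\Rightarrow(T_1,d),q_1\models\gamma$, which is exactly condition (2). No further work is required here.

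For the backward direction, assume that $q_1$ and $q_2$ satisfy conditions (1) and (2). These are verbatim the hypotheses of Lemma~\ref{lem:characterization 3}, whose conclusion is precisely $q_1\sim^{\varepsilon}_{A\!g} q_2$. Hence I would appeal directly to that lemma to close this direction.

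Since both directions reduce to citing established lemmas, the theorem's own argument is short; the genuine difficulty lives in those lemmas. In the forward direction, the only delicate case of Lemma~\ref{lem:characterization 2} is the path-quantifier clause, which relies on the strategy-transfer result Lemma~\ref{lem:characterization 1}; that result in turn depends on the finite-branching assumption and K$\ddot{o}$nig's lemma to extract, from a tree of approximately matching finite prefixes, an infinite matching outcome in $T_1$. In the backward direction, the heart of Lemma~\ref{lem:characterization 3} is that a failure of the forth/back conditions can always be witnessed by an ATL$_{\varepsilon}$ formula pair lying inside $H^{\varepsilon}_{A\!g}$, arranged by combining distinguishing formulas via the $\neg$, $\bigwedge$, $\bigvee$, $\mathbf{X}$ and $\langle\!\langle Ag\rangle\!\rangle$ closure clauses of Definition~\ref{def:relation of formula 1}. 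I expect the main obstacle to be keeping the constructed distinguishing pairs within $H^{\varepsilon}_{A\!g}$ while respecting the asymmetry of that relation (recall $(p,\langle\varepsilon\rangle p)\in H^{\varepsilon}_{A\!g}$ but not its converse); for the theorem statement itself, however, these matters are already discharged by the cited lemmas.
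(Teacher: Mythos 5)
Your proposal is correct and follows exactly the paper's own argument: the paper proves Theorem~\ref{th:main} by citing Lemma~\ref{lem:characterization 2} for the forward direction and Lemma~\ref{lem:characterization 3} for the backward direction, precisely as you do. Your additional remarks on where the real work lies (K\"onig's lemma via Lemma~\ref{lem:characterization 1}, and the construction of distinguishing pairs inside $H^{\varepsilon}_{A\!g}$) accurately reflect the content of those lemmas.
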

\begin{proof}
Immediately follows from Lemma~\ref{lem:characterization 2} and~\ref{lem:characterization 3}.
\qed\end{proof}

For $\varepsilon=0$, by the clause (1) in Lemma~\ref{lem:prop of Ag-E}, $(Ag,\varepsilon)$-alternating approximate bisimilarity is an equivalence relation.
In this case, the above result degenerates into one in the usual style.

\begin{corollary}\label{cor:characterization of 0}
Let $T_i=(S_i,\mathbb{P},\Omega,\Pi_i,\hbar_i)$ be two finite branching alternating transition systems $(i=1,2)$.
Suppose that $d$ is a metric over $\mathbb{P}$ and
$Ag\subseteq\Omega$. For any $q_1\in S_1$ and $q_2\in S_2$, $q_1\sim_{A\!g}^0q_2$ if and only if for any $Ag$-ATL$_0$ state formula $\varphi$,
\begin{center}
$(T_1,d),q_1\models\varphi\Leftrightarrow (T_2,d),q_2\models\varphi.$
\end{center}
\end{corollary}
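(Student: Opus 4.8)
The plan is to read off the corollary as the $\varepsilon=0$ instance of Theorem~\ref{th:main} and then convert the asymmetric two-sided characterization it provides into the symmetric logical equivalence asserted here. Write condition (A) for $q_1\sim^0_{A\!g}q_2$, condition (B) for the pair of implications (1)--(2) of Theorem~\ref{th:main} specialized to $\varepsilon=0$, and condition (C) for the stated logical equivalence over $Ag$-ATL$_0$ state formulas. Theorem~\ref{th:main} already supplies $(A)\Leftrightarrow(B)$, so it remains only to prove $(B)\Leftrightarrow(C)$. The two facts that power this are both contained in Lemma~\ref{lem:basic properties of H and E}: clause (1), that for $\varepsilon=0$ the two components of any pair in $H^0_{A\!g}$ are satisfaction-equivalent in each single model, and clause (3), that every $Ag$-ATL$_0$ state formula is satisfaction-equivalent to the left component of some pair in $H^0_{A\!g}$.

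For $(B)\Rightarrow(C)$ I would fix an arbitrary $Ag$-ATL$_0$ state formula $\varphi_0$ and use Lemma~\ref{lem:basic properties of H and E}(3) to obtain $(\varphi,\gamma)\in H^0_{A\!g}$ satisfying $(T,d),q\models\varphi_0\Leftrightarrow(T,d),q\models\varphi$ for all $q$, in both $T_1$ and $T_2$. One then chains
\[
(T_1,d),q_1\models\varphi_0\ \Leftrightarrow\ (T_1,d),q_1\models\varphi\ \Rightarrow\ (T_2,d),q_2\models\gamma\ \Leftrightarrow\ (T_2,d),q_2\models\varphi\ \Leftrightarrow\ (T_2,d),q_2\models\varphi_0,
\]
where the single implication is clause (1) of (B) and the passage from $\gamma$ back to $\varphi$ is Lemma~\ref{lem:basic properties of H and E}(1) applied in $T_2$. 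Running the identical chain with clause (2) of (B) (and Lemma~\ref{lem:basic properties of H and E}(1) in $T_1$) yields the reverse implication, so $\varphi_0$ holds at $q_1$ iff it holds at $q_2$; as $\varphi_0$ was arbitrary, (C) follows.

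For $(C)\Rightarrow(B)$ I would take any $(\varphi,\gamma)\in H^0_{A\!g}$ and note that both $\varphi$ and $\gamma$ are $Ag$-ATL$_0$ state formulas, since by Definition~\ref{def:relation of formula 1} the relation $H^0_{A\!g}$ is defined over such formulas. Hence (C) applies directly to $\varphi$, giving $(T_1,d),q_1\models\varphi\Leftrightarrow(T_2,d),q_2\models\varphi$, and Lemma~\ref{lem:basic properties of H and E}(1) lets one replace $\varphi$ by $\gamma$ in the target model; reading this as two separate implications gives exactly clauses (1) and (2) of (B). The whole argument is essentially bookkeeping, with all conceptual content residing in Lemma~\ref{lem:basic properties of H and E}(1): it is this clause that collapses the inherent asymmetry of $H^0_{A\!g}$ (recall that $(p,\langle\varepsilon\rangle p)$ but not $(\langle\varepsilon\rangle p,p)$ lies in the relation) into an ordinary symmetric logical equivalence once $\varepsilon=0$. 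The one point I would be careful about is that $(B)\Rightarrow(C)$ needs a single pair $(\varphi,\gamma)$ serving $T_1$ and $T_2$ at once; this is legitimate because the pair built in the proof of Lemma~\ref{lem:basic properties of H and E}(3) is produced by induction purely on the syntax of $\varphi_0$ and so does not depend on the transition system. Given that, there is no real obstacle and the corollary follows.
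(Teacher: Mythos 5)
Your proof is correct and follows essentially the same route as the paper's: both reduce the corollary to Theorem~\ref{th:main} at $\varepsilon=0$ and then use clauses (1) and (3) of Lemma~\ref{lem:basic properties of H and E} to translate between the asymmetric $H^0_{A\!g}$-conditions and plain logical equivalence over $Ag$-ATL$_0$ state formulas. The only difference is presentational --- the paper writes a single chain of equivalences where you argue the two directions separately --- and your explicit remark that the pair produced by Lemma~\ref{lem:basic properties of H and E}(3) is built purely syntactically, hence serves $T_1$ and $T_2$ simultaneously, makes precise a point the paper's proof uses silently.
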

\begin{proof} Suppose that $q_1\in S_1$ and $q_2\in S_2$.
Then we have

\ \ \ \ $q_1\sim^0_{A\!g}q_2$

iff for any $(\varphi,\gamma)\in H^0_{A\!g}$, $(T_1,d),q_1\models\varphi\Rightarrow (T_2,d),q_2\models\gamma$, and vice verse

\ \ \ \ \ \ \ \ \ \ \ \ \ \ \ (by Theorem~\ref{th:main})

iff for any $(\varphi,\gamma)\in H^0_{A\!g}$, $(T_1,d),q_1\models\varphi\Rightarrow (T_2,d),q_2\models\varphi$,  and vice verse

\ \ \ \ \ \ \ \ \ \ \ \ \ \ \ (by (1) in Lemma~\ref{lem:basic properties of H and E})

iff for any $(\varphi,\gamma)\in H^0_{A\!g}$, $(T_1,d),q_1\models\varphi\Leftrightarrow (T_2,d),q_2\models\varphi$

iff for any $Ag$-ATL$_0$ state formula $\varphi$, $(T_1,d),q_1\models\varphi\Leftrightarrow (T_2,d),q_2\models\varphi$.

\ \ \ \ \ \ \ \ \ \ \ \ \ \ \ (by (3) in Lemma~\ref{lem:basic properties of H and E})
\qed\end{proof}

As mentioned in Section~\ref{sec:pre}, a logical characterization of alternating bisimilarity in terms of ATL has been provided in~\cite{AlurCONCUr98}.
In the following, we show that this characterization can be obtained from the above result immediately.
The syntax and semantic  of ATL is similar to ATL$_\varepsilon$ and the only difference between them is that ATL does not refer to the modal operator $\langle\varepsilon\rangle$ and metric over observations.
Here we do not recall ATL formally, which  can be found in~\cite{alur2002alternating}\cite{AlurCONCUr98}.
Since the semantics of ATL has nothing to do with metric over observations, we use $T,q\models\varphi$ to denote that the state $q$ of $T$ satisfies ATL formula $\varphi$.

\begin{corollary}\label{cor:characterization}
Let $T_i=(S_i,\mathbb{P},\Omega,\Pi_i,\hbar_i)$ be two finite branching alternating transition systems $(i=1,2)$ and
$Ag\subseteq\Omega$. For any $q_1\in S_1$ and $q_2\in S_2$, $q_1\sim_{A\!g}q_2$ if and only if for any $Ag$-ATL state formula $\varphi$,
\begin{center}
$T_1,q_1\models\varphi\Leftrightarrow T_2,q_2\models\varphi.$
\end{center}
\end{corollary}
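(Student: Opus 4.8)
The plan is to reduce this statement to Corollary~\ref{cor:characterization of 0}, which already characterizes $\sim^0_{A\!g}$ in terms of $Ag$-ATL$_0$ state formulas, and then to argue that passing from $Ag$-ATL$_0$ to $Ag$-ATL does not change the induced logical equivalence on states. First, by clause (1) of Lemma~\ref{lem:prop of Ag-E} we have $\sim_{A\!g}\,=\,\sim^0_{A\!g}$, so Corollary~\ref{cor:characterization of 0} tells us that $q_1\sim_{A\!g}q_2$ holds if and only if $(T_1,d),q_1\models\varphi \Leftrightarrow (T_2,d),q_2\models\varphi$ for every $Ag$-ATL$_0$ state formula $\varphi$ under the metric semantics of Definition~\ref{def:semantic of ATLe}. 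It therefore remains to show that agreement of two states on all $Ag$-ATL$_0$ state formulas is equivalent to their agreement on all $Ag$-ATL state formulas under the metric-free semantics $T,q\models$.

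The bridge between the two logics is the observation that, since $d$ is a metric, $d(p,\Pi(q))\leq 0$ holds exactly when $p=\Pi(q)$; hence, by Definition~\ref{def:semantic of ATLe}, the formulas $\langle 0\rangle p$ and $p$ are satisfied by precisely the same states of any $(T,d)$. Building on this, I would define a translation $(\cdot)^{\flat}$ from $Ag$-ATL$_0$ formulas to $Ag$-ATL formulas that replaces every subformula of the form $\langle 0\rangle p$ by $p$ and commutes with $\neg$, $\wedge$, $\mathbf{X}$, $\mathbf{U}$, and the path quantifier $\langle\!\langle Ag\rangle\!\rangle$. One checks that $(\cdot)^{\flat}$ sends state formulas to state formulas, path formulas to path formulas, and keeps every path quantifier parameterized by $Ag$, so $\varphi^{\flat}$ is indeed an $Ag$-ATL formula. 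By induction on the rank functions $\xi_s,\xi_p$ of Definition~\ref{def:complexity of formula}, using the equivalence of $\langle 0\rangle p$ and $p$ for the base step, one obtains $(T,d),q\models\varphi \Leftrightarrow (T,d),q\models\varphi^{\flat}$ for every state formula, together with the analogous statement for path formulas and infinite state sequences.

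Finally, I would observe that on $Ag$-ATL formulas the two semantic relations coincide: since ATL differs from ATL$_\varepsilon$ only by the absence of the $\langle\varepsilon\rangle$ operator, a routine induction on formula structure gives $(T,d),q\models\varphi \Leftrightarrow T,q\models\varphi$ for every $Ag$-ATL formula $\varphi$, the $\langle\varepsilon\rangle$ clause simply never being invoked. Combining the last two facts, for any $Ag$-ATL$_0$ state formula $\varphi$ we get $(T,d),q\models\varphi \Leftrightarrow T,q\models\varphi^{\flat}$, while every $Ag$-ATL state formula is its own translation. Hence two states agree on all $Ag$-ATL$_0$ state formulas if and only if they agree on all $Ag$-ATL state formulas; together with the reduction via Corollary~\ref{cor:characterization of 0} this yields the claim.

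I do not expect a genuine obstacle here, since the essential content is merely that setting $\varepsilon=0$ collapses $\langle 0\rangle p$ to $p$ and thereby collapses ATL$_0$ to ATL. The one point requiring care is the bookkeeping in the translation $(\cdot)^{\flat}$ --- ensuring that it respects the state/path dichotomy and the $Ag$-parameterization of path quantifiers, so that $Ag$-ATL$_0$ formulas are mapped to bona fide $Ag$-ATL formulas --- and the verification that the two satisfaction relations really do agree on the common fragment.
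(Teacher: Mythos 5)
Your proposal is correct and follows essentially the same route as the paper: the paper's proof also reduces the claim to Corollary~\ref{cor:characterization of 0} via clause (1) of Lemma~\ref{lem:prop of Ag-E}, and simply asserts (``it is not difficult to see'') the equivalence between logical equivalence w.r.t.\ $Ag$-ATL$_0$ and w.r.t.\ $Ag$-ATL that you establish explicitly. Your translation $(\cdot)^{\flat}$ collapsing $\langle 0\rangle p$ to $p$, together with the observation that the metric and metric-free semantics agree on the common fragment, is exactly the bookkeeping the paper leaves implicit.
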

\begin{proof} It is not difficult to see that two states are logical equivalent w.r.t $Ag$-ATL$_0$ state formulas if and only if  so are they w.r.t $Ag$-ATL state formulas.
Thus by the clause (1) in Lemma~\ref{lem:prop of Ag-E} and Corollary~\ref{cor:characterization of 0}, the conclusion holds.
\qed\end{proof}

\section{Application of modal characterization: temporal logical control}
\label{sec:finite abstraction}
For control systems with disturbances, Pola and Tabuada adopt infinite alternating transition systems to model their sampling systems and construct finite alternating transition systems as their finite abstractions~\cite{pola2008symbolic}\cite{pola2009symbolic}.
In these work, alternating approximate bisimilarity is introduced to capture the equivalence between these sampling systems and finite abstractions.
Based on results obtained in the previous section, we will establish a relationship between linear temporal logical specifications which are satisfied by these sampling systems under control and by the corresponding finite abstractions under control, respectively.
Moreover, we give a potential application of this result in the linear temporal logical control of control systems with disturbances.

\subsection{Control systems and its finite abstractions}
This subsection recalls some notions and results about control systems with disturbances and their finite abstractions provided by~\cite{pola2008symbolic}\cite{pola2009symbolic}.
Before doing so, we introduce some useful notations.

Given a vector $\textit{x}\in\mathbb{R}^{n}$,
we denote by $\textit{x}_{i}$ the $\textit{i}$-th element of $\textit{x}$ and $\|x\|\triangleq \max \{|\textit{x}_{1}|, |\textit{x}_{2}|, \cdots, |\textit{x}_{n}|\}$ where $|\textit{x}_{i}|$ is the absolute value of $\textit{x}_{i}$.
The set $X\subseteq \mathbb{R}^{n}$ is said to be bounded if and only if $ \sup\{\|x\|:x\in X\}<\infty$.
For any measurable function $f:\mathbb{R}_{+}^{0}\rightarrow \mathbb{R}$, $\|f\|_{\infty}\triangleq\sup\{\|f(t)\|,t\geq 0\}$ and $f$ is said to be essentially bounded if $\|f\|_{\infty}<\infty$.
For a given time $\tau\in\mathbb{R}_{+}$, define $f_{\tau}$ so that $f_{\tau}(t)=f(t)$ for any $t\in[0,\tau)$, and $f(t)=0$ elsewhere; $f$ is said to be locally essentially bounded if for any $\tau\in\mathbb{R}_{+}$, $f_{\tau}$ is essentially bounded.
In this section, we consider the metric $d$ on $\mathbb{R}^{n}$ defined as ${d}(x,y)=\max\{|x_1-y_1|,|x_2-y_2|,\cdots |x_n-y_n|\}$.

\begin{definition}\label{def:control system}\cite{pola2008symbolic}\cite{pola2009symbolic}
A control system with disturbances is a quadruple $\Sigma=(X,W,\mathcal{W},f)$, where

$\bullet$ $X\subseteq\mathbb{R}^n$ is the state space;

$\bullet$ $W=U\times V$ is the input space, where

\ \ \ $U\subseteq \mathbb{R}^m$ is the control input space;

\ \ \ $V\subseteq \mathbb{R}^s$ is the disturbance input space;

$\bullet$ $\mathcal{W}$ is a subset of the set of all measurable and locally essentially bounded functions of time from intervals of the form $]a,b[\subseteq \mathbb{R}$ to $W$ with $a<0$ and $b>0$;

$\bullet$ $f:X\times W\rightarrow X$ is a continuous map satisfying the following Lipschitz assumption: for every compact set $K\subset X$, there exists a constant $\kappa>0$ such that
\begin{equation*}
||f(x,w)-f(y,w)||\leq \kappa||x-y||
\end{equation*}
for all $x,y\in K$ and all $w\in W$.

A locally absolutely continuous curve $\mathbf{x}:]a,b[\rightarrow X$
is said to be a trajectory if there exists $\mathbf{w}\in\mathcal{W}$ satisfying $\mathbf{\dot{x}}(t)=f(\mathbf{x}(t),\mathbf{w}(t))$ for almost all $t\in ]a,b[$.

A control system is said to be forward complete if and only if every trajectory is defined on an interval of the form $]a,\infty[$.
\end{definition}

\textbf{Convention.} \textit{As in \cite{pola2008symbolic}\cite{pola2009symbolic}, we assume that  $0\in X$, $X\subseteq \mathbb{R}^{n}$ is a bounded polytopic sets with non-empty interior, and the control system $\Sigma$ is forward complete }.

For such systems, Pola and Tabuada adopt a variety of alternating transition systems as models of their sampling systems and finite abstractions~\cite{pola2008symbolic}\cite{pola2009symbolic}.

\begin{definition}\label{def:var of ATS}
An alternating transition system is a tuple
$T=(S,A,B,\longrightarrow,\mathbb{P},\Pi)$
consisting of a set of states $S$, a set of control labels $A$, a set of disturbance labels $B$, a transition relation $\rightarrow\subseteq S\times A\times B\times S$, an observation set $\mathbb{P}$ and an observation function $\Pi:S\rightarrow \mathbb{P}$.

We say that an alternating transition system $T$ is metric if the observation set $\mathbb{P}$ is equipped with a metric, $T$ is non-blocking if $\{q':q\xrightarrow{a,b}q'\}\not=\emptyset$ for any $q\in Q$, $a\in A$ and $b\in B$, and
$T$ is finite if $S$, $A$ and $B$ are finite.
An infinite sequence $\sigma\in S^{\omega}$ is said to be a trajectory of $T$ if and only if for all $i\in\mathbb{N}$, $\sigma[i]\xrightarrow{a_{i},b_{i}} \sigma[i+1]$ for some $a_{i}\in A $ and $b_{i}\in B$.
\end{definition}

We may view the above alternating transition system as a variant of one defined in Definition~\ref{def:ATS}.
The differences between them lie in: the above notion involves only two agents which choose successor states by means of choosing inputs, moreover, successor states of a given state may not be determined even if these two agents make choices.
In this section, following Pola and Tabuada, the notion ``alternating transition system" refers to the one defined above.
Similar to Definition~\ref{def:strategy} and~\ref{def:outcome}, the strategy and the corresponding outcomes of these systems are defined below.

\begin{definition}\label{def:strategy of var}
A control strategy for an alternating transition system $T=(S,A,B,\longrightarrow,\mathbb{P},\Pi)$ is a function $F:S^{+}\rightarrow 2^{A}-\{\emptyset\}$. For any $q\in S$, the outcomes $Out^{n}_{T}(q,F)$ $(n\in\mathbb{N})$ and $Out_{T}(q,F)$  of $F$ from $q$ are defined as follows:
\begin{equation*}
\begin{aligned}
Out^{n}_{T}(q,F) =\{s\in S^{n}:  & s[1]=q\ \mathrm{and}
\\ &\forall 1\leq i<n \exists a_{i}\in F(s[1,i]) \exists b_{i}\in B (s[i]\xrightarrow{a_{i},b_{i}} s[i+1])\},
\end{aligned}
\end{equation*}
\begin{equation*}
\begin{aligned}
 Out_{T}(q,F)=\{\sigma\in S^{\omega}: & \sigma[1] =q\ \mathrm{and}
  \\ &\forall i\in\mathbb{N} \exists a_{i}\in F(\sigma [1,i]) \exists b_{i}\in B (\sigma [i]\xrightarrow{a_{i},b_{i}} \sigma[i+1])\}.
  \end{aligned}
\end{equation*}
\end{definition}

The notion of alternating approximate bisimilarity provided by Pola and Tabuada is recalled below.
It is not difficult to see that such notion and one in Definition~\ref{def:approximate bisimi} are the same in spirit.

\begin{definition}\label{def:var of AAB}\cite{pola2009symbolic}
Let $T_{i}=(S_{i},A_{i}, B_{i}, \xrightarrow{}_i, \mathbb{P}, \Pi_{i})$ ($i=1,2$) be two metric, non-blocking alternating transition systems and let $d$ be a metric over $\mathbb{P}$.
Given a precision $\varepsilon \in \mathbb{R}_{+}$, a relation $R\subseteq S_{1} \times S_{2}$ is said to be an alternating $\varepsilon$-approximate ($A\varepsilon A$) bisimulation relation between $T_{1}$ and $T_{2}$ if for any $(q_{1},q_{2})\in R$,

(i) $d(\Pi_{1}(q_{1}),\Pi_{2}(q_{2}))\leq\varepsilon$;

(ii) $\forall a_{1}\in A_{1} \exists a_{2}\in A_{2} \forall b_{2} \in B_{2} \forall q'_{2}\in S_{2} (q_{2}\xrightarrow{a_{2},b_{2}}_2 q'_{2} \Rightarrow \exists b_{1}\in B_{1} {\exists q'_{1}\in S_{1}}\\ ({q_{1}\xrightarrow{a_{1},b_{1}}_1 q'_{1}} \textrm{ and }  (q'_{1},q'_{2})\in R))$.

(iii) $\forall a_{2}\in A_{2} \exists a_{1}\in A_{1} \forall b_{1} \in B_{1} \forall q'_{1}\in S_{1} (q_{1}\xrightarrow{a_{1},b_{1}}_1 q'_{1} \Rightarrow \exists b_{2}\in B_{2} \exists q'_{2}\in S_{2}\\  ({q_{2}\xrightarrow{a_{2},b_{2}}_2 q'_{2}} \textrm{ and }  (q'_{1},q'_{2})\in R))
$.

For any $q_{1}\in S_{1}$ and $q_{2}\in S_{2}$, they are said to be $A\varepsilon A$ bisimilar, in symbols $q_{1}\sim_{\varepsilon} q_{2}$,  if there exists an $A\varepsilon A$ bisimulation relation $R$ between $T_{1}$ and $T_{2}$ such that $(q_{1},q_{2})\in R$.
Moreover, $T_{1}$ and $T_{2}$ are said to be $A\varepsilon A$ bisimilar, in symbols $T_{1}\simeq_{\varepsilon} T_{2}$,  if there exists an $A\varepsilon A$ bisimulation relation $R$ between $T_{1}$ and $T_{2}$ such that $S_{1}=\{q_{1}\in S_{1}: (q_{1},q_{2})\in R\ \mathrm{for}\ \mathrm{some}\ q_{2}\in S_{2}\}$ and $S_{2}=\{q_{2}\in S_{2}: (q_{1},q_{2})\in R\ \mathrm{for}\ \mathrm{some}\ q_{1}\in S_{1}\}$.
\end{definition}

For control systems with disturbances, Pola and Tabuada construct infinite and finite non-blocking alternating transition systems as their samples and finite abstractions, respectively.
The detailed construction is referred to~\cite{pola2009symbolic}.
Moreover, they demonstrate that under some assumption, the sample $T_\tau(\Sigma)$ and finite abstraction are alternating approximate bisimilar.

\begin{theorem}~\cite{pola2009symbolic}\label{Th:bisi}
Given a control system $\Sigma=(X,U\times V,\mathcal{W},f)$, if $\Sigma$ is $\delta$-GAS and $U\times V$ is compact, then for any desired precision $\varepsilon\in\mathbb{R}_+$, there exist $\tau\in\mathbb{R}_+$ and a finite abstraction $T$ of $\Sigma$ that is $A\varepsilon A$ bisimilar to the sampling system $T_{\tau}(\Sigma)$ of $\Sigma$\footnote{The definition of $\delta$-GAS can be found in~\cite{pola2009symbolic}.}.
\end{theorem}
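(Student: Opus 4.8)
The plan is to follow the symbolic-abstraction recipe of Pola and Tabuada: exploit the incremental stability ($\delta$-GAS) of $\Sigma$ to obtain a $\mathcal{KL}$-type bound on the divergence of sampled trajectories, then build a finite alternating transition system by quantizing both the state space and the input spaces, and finally exhibit the natural ``closeness'' relation as an $A\varepsilon A$ bisimulation by choosing the sampling period $\tau$ and the quantization parameters so that the accumulated error stays below the prescribed precision $\varepsilon$.

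First I would record the consequence of $\delta$-GAS: there is a $\mathcal{KL}$ function $\beta$ such that, for any two initial states $x,y\in X$ and any admissible input curve $\mathbf{w}\in\mathcal{W}$ applied to both, the corresponding trajectories $\xi_{x\mathbf{w}}$ and $\xi_{y\mathbf{w}}$ satisfy $\|\xi_{x\mathbf{w}}(t)-\xi_{y\mathbf{w}}(t)\|\le\beta(\|x-y\|,t)$ for all $t$. Because $U\times V$ is compact and, by the standing Convention, $X$ is a bounded polytope, this estimate is uniform over the admissible inputs. Since $\beta(r,t)\to 0$ as $t\to\infty$, a routine reparametrization lets me select a sampling period $\tau\in\mathbb{R}_+$, a state-lattice resolution $\eta$, and input-quantization radii such that $\beta(\varepsilon,\tau)+\eta\le\varepsilon$ while the Lipschitz constant of $f$ on the relevant compact set controls the extra deviation introduced by rounding the inputs.

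Next I would construct the sampling system $T_{\tau}(\Sigma)$, whose states are the points of $X$, whose control (resp.\ disturbance) labels are the control (resp.\ disturbance) curves on $[0,\tau[$, and whose transition is the time-$\tau$ flow; and the finite abstraction $T$, whose states form a finite $\eta$-net of $X$, whose labels form finite nets of $U$ and $V$, and whose transition relates a lattice state to each lattice point within $\eta$ of the time-$\tau$ flow. Taking the relation $R=\{(x,q):d(\Pi(x),\Pi(q))\le\varepsilon\}$, I would verify the three clauses of Definition~\ref{def:var of AAB}. Clause (i) holds by definition of $R$; for clauses (ii) and (iii) I would, given a control label of one system, select the nearest admissible control of the other, and then, against an adversarially chosen disturbance of one system, select the nearest disturbance of the other, using the $\mathcal{KL}$-bound together with the Lipschitz estimate on $f$ to control the one-step deviation of the two flows and the quantization rounding, so that the successors again lie within $\varepsilon$ and hence in $R$. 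Because $R$ covers every state of both systems, this yields $T\simeq_{\varepsilon}T_{\tau}(\Sigma)$.

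I expect the main obstacle to be precisely the interplay of the alternating quantifiers in clauses (ii)--(iii): the matching control must be committed before the adversary's disturbance is revealed, and one must then match the disturbance so that the two sampled flows stay $\varepsilon$-close regardless of the adversary's choice. Making this work requires the $\delta$-GAS estimate to be genuinely uniform in the disturbance and the parameters $\tau$, $\eta$, and the input-quantization radius to be tuned jointly so that the contraction supplied by $\beta$ dominates both the quantization error and the worst-case disturbance mismatch. The detailed construction and the exact parameter selection are carried out in~\cite{pola2009symbolic}.
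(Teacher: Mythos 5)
The paper itself contains no proof of this statement: Theorem~\ref{Th:bisi} is imported verbatim from~\cite{pola2009symbolic} (note the citation attached to the theorem), so there is no internal argument to compare yours against. Your sketch faithfully reconstructs the approach of that cited source --- the $\delta$-GAS $\mathcal{KL}$ bound, quantization of the state space and of the compact input space, and verification that the $\varepsilon$-closeness relation is an $A\varepsilon A$ bisimulation under a joint choice of $\tau$ and the quantization parameters, with the alternating-quantifier subtlety (committing the control before the adversarial disturbance, then absorbing the disturbance mismatch via Lipschitz/compactness estimates) correctly identified as the crux --- so it is consistent with the proof on which the theorem actually rests.
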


For convenience, we set $T_{\varepsilon,\tau}(\Sigma)\triangleq\{T: T$  is a finite abstraction of $\Sigma$ that is $A\varepsilon A$ bisimilar to the sampling system $T_{\tau}(\Sigma)\}$.
\subsection{Logical specifications satisfied by samples and abstractions}
In recent years, temporal logic, due to its resemblance to natural language and the existence of algorithms for model checking, is widely adopted to describe the desired specifications of control systems. For example, linear temporal logic (LTL) is used to express specifications of discrete-time linear systems [8] and continuous-time linear systems [7].
On the other hand, as mentioned in Introduction, finite abstractions of control systems often are adopted to the analysis and design of control systems.
Then a natural question arises at this point: what is the relationship between linear temporal logical specifications which are satisfied under control by sampling systems and by the corresponding finite abstractions respectively?
This subsection intends to consider such question.
To this end, we introduce linear temporal logic LTL$^{\varepsilon}_+$ as follows.

\begin{definition}
Let $\mathbb{P}$ be a finite set of propositions and $\varepsilon\in \mathbb{R}_+$.
LTL$^{\varepsilon}_+(\mathbb{P})$ formulas  are defined inductively as:
\begin{equation*}
\phi::=p|\langle\varepsilon\rangle p|\phi_1\vee\phi_2|\phi_1\wedge\phi_2|\mathbf{X}\phi|\phi_1\mathbf{U}\phi_2, \textrm{ where }p\in \mathbb{P}.
\end{equation*}

For any LTL$^{\varepsilon}_+(\mathbb{P})$ formula $\phi$, if $\langle\varepsilon\rangle$ does not occur in $\phi$, then $\phi$ is said to be a LTL$_+(\mathbb{P})$ formula.
\end{definition}

As usual, if $\mathbb{P}$ is clear from the context, we abbreviate LTL$^{\varepsilon}_+(\mathbb{P})$ (LTL$_+(\mathbb{P})$) to LTL$^{\varepsilon}_+$ (LTL$_+$, respectively).

\begin{definition}\label{def:semantic for LTL}
Let $T=(S,A, B, \xrightarrow{}, \mathbb{P}, \Pi)$ be an alternating transition system,  $\varepsilon\in \mathbb{R}_+$ and let $d$ be a metric over $\mathbb{P}$.
The satisfaction relation $\models\subseteq S^{\omega}\times LTL^{\varepsilon}_+(\mathbb{P})$ is inductively defined as:

$\bullet$ $(T,d),\sigma\models p$ iff $\Pi(\sigma[1])=p$;

$\bullet$ $(T,d),\sigma\models\langle\varepsilon\rangle p$ iff $d(p,\Pi(\sigma[1]))\leq\varepsilon$;

$\bullet$ $(T,d),\sigma\models\phi_1\vee\phi_2$ iff $(T,d),\sigma\models\phi_1$ or $(T,d),\sigma\models\phi_2$;

$\bullet$ $(T,d),\sigma\models\phi_1\wedge\phi_2$ iff $(T,d),\sigma\models\phi_1$ and $(T,d),\sigma\models\phi_2$;

$\bullet$ $(T,d),\sigma\models\mathbf{X}\phi$ iff $(T,d),\sigma[2,\infty]\models\phi$;

$\bullet$ $(T,d),\sigma\models\phi_1\mathbf{U}\phi_2$ iff there exists $j\in\mathbb{N}$ such that $(T,d),\sigma[j,\infty]\models\phi_2$ and for any $i<j$, $(T,d),\sigma[i,\infty]\models\phi_1$.
\end{definition}

Obviously, LTL$^{\varepsilon}_+$ can be viewed as a sublanguage of ATL$_{\varepsilon}$.
In particular, by the above definitions and Definition~\ref{def:logic app} and~\ref{def:semantic of ATLe}, each LTL$^{\varepsilon}_+$ formula can be seen as a path formula of ATL$_{\varepsilon}$.

Inspired by Definition~\ref{def:relation of formula 1}, we introduce a transformation function below.

\begin{definition}\label{def:transform}
Let $\mathbb{P}$ be a set of propositions and $\varepsilon\in \mathbb{R}_+$.
The transformation $Tr_\varepsilon$ mapping LTL$_+$ formulas to LTL$_+^\varepsilon$ formulas is inductively defined as follows:

$\bullet$ $Tr_\varepsilon(p)=\langle\varepsilon\rangle p$ for any $p\in\mathbb{P}$;

$\bullet$ $Tr_\varepsilon(\phi_1\vee\phi_2)=Tr_\varepsilon(\phi_1)\vee Tr_\varepsilon(\phi_2)$;

$\bullet$ $Tr_\varepsilon(\phi_1\wedge\phi_2)=Tr_\varepsilon(\phi_1)\wedge Tr_\varepsilon(\phi_2)$;

$\bullet$ $Tr_\varepsilon(\mathbf{X}\phi)=\mathbf{X}Tr_\varepsilon(\phi)$;

$\bullet$ $Tr_\varepsilon(\phi_1\mathbf{U}\phi_2)=Tr_\varepsilon(\phi_1)\mathbf{U} Tr_\varepsilon(\phi_2)$.
\end{definition}

Clearly, the graph of such transformation is a subrelation of $E^\varepsilon_{A\!g}$ (see Definition~\ref{def:relation of formula 1}), that is

\begin{proposition}\label{pro:def of LTL-E}
For any LTL$_+$ formula $\phi$, $(\phi,Tr_\varepsilon(\phi))\in E^\varepsilon_{A\!g}$.
\end{proposition}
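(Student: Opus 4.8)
The plan is to prove the statement by structural induction on the LTL$_+$ formula $\phi$, invoking the closure conditions (1) and (5)--(9) of Definition~\ref{def:relation of formula 1} at each step. For the base case $\phi=p$ with $p\in\mathbb{P}$, condition (1) gives $(p,\langle\varepsilon\rangle p)\in H^{\varepsilon}_{A\!g}$, and condition (5) then lifts this to $(p,\langle\varepsilon\rangle p)\in E^{\varepsilon}_{A\!g}$; since $Tr_\varepsilon(p)=\langle\varepsilon\rangle p$ by Definition~\ref{def:transform}, this is exactly the desired membership. Note that LTL$_+$ formulas contain neither $\langle\varepsilon\rangle$ nor $\neg$, so the only cases to treat are $p$, $\phi_1\vee\phi_2$, $\phi_1\wedge\phi_2$, $\mathbf{X}\phi_1$ and $\phi_1\mathbf{U}\phi_2$.

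For the inductive cases built from $\wedge$, $\mathbf{X}$ and $\mathbf{U}$, the argument is a direct match between the recursive clauses of $Tr_\varepsilon$ and the closure conditions on $E^{\varepsilon}_{A\!g}$. Assuming the induction hypothesis $(\phi_i,Tr_\varepsilon(\phi_i))\in E^{\varepsilon}_{A\!g}$ for the immediate subformulas, I would apply condition (7) for $\phi=\phi_1\wedge\phi_2$, condition (8) for $\phi=\mathbf{X}\phi_1$, and condition (9) for $\phi=\phi_1\mathbf{U}\phi_2$; in each case the pair produced coincides with $(\phi,Tr_\varepsilon(\phi))$ because $Tr_\varepsilon$ distributes over the corresponding operator.

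The one case requiring care, and the main obstacle, is disjunction, because $E^{\varepsilon}_{A\!g}$ has no clause for $\vee$: when $\phi$ is regarded as an ATL$_{\varepsilon}$ path formula, the disjunction must be read through the macro $\phi_1\vee\phi_2=\neg(\neg\phi_1\wedge\neg\phi_2)$, and the only relevant closure rule, condition (6), swaps the two components of a pair. I would handle this by chaining (6), (7), (6). Writing $\phi_i'=Tr_\varepsilon(\phi_i)$, the induction hypothesis gives $(\phi_i,\phi_i')\in E^{\varepsilon}_{A\!g}$; condition (6) then yields $(\neg\phi_i',\neg\phi_i)\in E^{\varepsilon}_{A\!g}$ for $i=1,2$; condition (7) combines these into
\begin{equation*}
(\neg\phi_1'\wedge\neg\phi_2',\ \neg\phi_1\wedge\neg\phi_2)\in E^{\varepsilon}_{A\!g},
\end{equation*}
and a final application of condition (6) produces
\begin{equation*}
(\neg(\neg\phi_1\wedge\neg\phi_2),\ \neg(\neg\phi_1'\wedge\neg\phi_2'))\in E^{\varepsilon}_{A\!g}.
\end{equation*}
By the macro expansion this last pair is precisely $(\phi_1\vee\phi_2,\ \phi_1'\vee\phi_2')=(\phi,Tr_\varepsilon(\phi))$, so the disjunctive case closes. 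The delicate point throughout is tracking the order-reversal induced by condition (6): one must apply it an even number of times, so that the original formula ends up in the first component and its transform in the second, which is exactly why condition (6) must bracket condition (7) in the chain above.
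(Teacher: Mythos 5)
Your proof is correct and is exactly the argument the paper compresses into its one-line remark that the proposition ``follows from Definitions~\ref{def:relation of formula 1} and~\ref{def:transform}'': a structural induction in which each clause of $Tr_\varepsilon$ is matched against the corresponding closure condition of $E^{\varepsilon}_{A\!g}$. Your treatment of disjunction---reading $\phi_1\vee\phi_2$ as $\neg(\neg\phi_1\wedge\neg\phi_2)$ and bracketing condition (7) with two applications of the order-reversing condition (6) so that the reversals cancel---is the one non-immediate detail, which the paper silently glosses over and which you resolve correctly.
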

\begin{proof}
Follows from Definition~\ref{def:relation of formula 1} and~\ref{def:transform}.\qed
\end{proof}

Then, by Proposition~\ref{pro:basic properties of H and E}, for each pair of formulas $\phi$ and $Tr_\varepsilon(\phi)$, if a state sequence satisfies $\phi$ then it satisfies $Tr_\varepsilon(\phi)$.
But the converse of this result fails in general.
So, given $\phi$ and $Tr_\varepsilon(\phi)$, when considering them as  specifications, we may view $Tr_\varepsilon(\phi)$ as a \textit{looser} version of specification $\phi$.

Similar to Lemma \ref{lem:characterization 2}, we may prove the following result.
It should be pointed out that although there exist some differences between notions involved in Lemma~\ref{lem:characterization 2} and Proposition~\ref{pro:control systems' property} (see Definition~\ref{def:ATS} and~\ref{def:var of ATS}, Definition~\ref{def:approximate bisimi} and~\ref{def:var of AAB}), the latter may be proved analogously to the former.

\begin{proposition}\label{pro:control systems' property}
Let $T_{1}=(S_{1},A_{1}, B_{1}, \xrightarrow{}_1, \mathbb{P}, \Pi_{1})$ be an infinite, non-blocking alternating transition system and $T_{2}=(S_{2},A_{2}, B_{2}, \xrightarrow{}_2, \mathbb{P}, \Pi_{2})$ be a finite, non-blocking alternating transition system.
Suppose that $d$ is a metric over $\mathbb{P}$ and $\varepsilon\in\mathbb{R}_+$.
For any $q_1\in S_1$ and $q_2\in S_2$ with $q_1\sim_\varepsilon q_2$ and for any LTL$_+$ formula $\phi$, if there exists a strategy $F:S^+\rightarrow 2^{A_2}$ of $T_2$ such that $(T_2,d),\sigma\models\phi$ for any $\sigma\in Out(q_2,F)$, then so does $T_1$ for $Tr_\varepsilon(\phi)$.
\end{proposition}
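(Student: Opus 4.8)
The plan is to mirror the proof of Lemma~\ref{lem:characterization 2}: the hypothesis says that the control agent can force every outcome from $q_2$ to satisfy the specification $\phi$, and the conclusion is the analogous statement for $q_1$ with the looser specification $Tr_\varepsilon(\phi)$. Since $(\phi,Tr_\varepsilon(\phi))\in E^\varepsilon_{A\!g}$ by Proposition~\ref{pro:def of LTL-E}, this is exactly the situation of clause (2-b) of Lemma~\ref{lem:characterization 2} packaged inside a single $\langle\!\langle Ag\rangle\!\rangle$-quantifier. Accordingly I would isolate two ingredients — a strategy-transfer lemma playing the role of Lemma~\ref{lem:characterization 1}, and a formula-preservation induction playing the role of the path-formula clauses of Lemma~\ref{lem:characterization 2} — and then combine them. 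Throughout I fix an $A\varepsilon A$ bisimulation $R$ with $(q_1,q_2)\in R$, and extend the relation to finite and infinite sequences componentwise, writing $\sigma_1\sim_\varepsilon\sigma_2$ when $\sigma_1[i]\sim_\varepsilon\sigma_2[i]$ for all $i$.

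First I would establish the strategy-transfer lemma: from the control strategy $F$ of $T_2$ supplied by the hypothesis, there is a control strategy $F'$ of $T_1$ such that every $\sigma_1\in Out(q_1,F')$ satisfies $\sigma_1\sim_\varepsilon\sigma_2$ for some $\sigma_2\in Out(q_2,F)$. This is the counterpart of Lemma~\ref{lem:characterization 1} with the two systems interchanged. I would define $F'$ by recursion on the length of $T_1$-prefixes, attaching to each prefix $s_1$ reachable under $F'$ a matched $T_2$-prefix $s_2\in Out(q_2,F)$ with $s_1[j]\mathrel{R}s_2[j]$ for all $j$: at such an $s_1$ I pick a control label $a_2\in F(s_2)$ (nonempty since $F$ maps into $2^{A_2}-\{\emptyset\}$) and apply clause (iii) of Definition~\ref{def:var of AAB} to the pair $(s_1[end],s_2[end])\in R$ to obtain $a_1\in A_1$ such that every $T_1$-successor of $s_1[end]$ under $a_1$ is matched, inside $R$, by some $a_2$-successor of $s_2[end]$ in $T_2$; I set $F'(s_1)=\{a_1\}$ and, for each one-step extension $s_1q'_1$, record one such matched extension $s_2q'_2\in Out(q_2,F)$. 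On prefixes not reachable under $F'$, I let $F'$ take any nonempty value. Non-blocking of $T_1$ ensures the outcomes are genuinely infinite. Finally, fixing $\sigma_1\in Out(q_1,F')$, I form the tree whose nodes are the prefixes $s_2\in Out^n(q_2,F)$ with $s_2\sim_\varepsilon\sigma_1[1,n]$, ordered by one-step extension; the recorded witnesses make every level nonempty, and since $T_2$ is finite the tree is finitely branching, so K\"onig's lemma yields an infinite branch, i.e. the desired $\sigma_2\in Out(q_2,F)$ with $\sigma_2\sim_\varepsilon\sigma_1$.

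Next I would prove, by induction on the structure of the LTL$_+$ formula $\phi$, that $\sigma_1\sim_\varepsilon\sigma_2$ and $(T_2,d),\sigma_2\models\phi$ together imply $(T_1,d),\sigma_1\models Tr_\varepsilon(\phi)$. The sole base case is $\phi=p$ with $Tr_\varepsilon(p)=\langle\varepsilon\rangle p$: from $\Pi_2(\sigma_2[1])=p$, $\sigma_1[1]\sim_\varepsilon\sigma_2[1]$, clause (i) of Definition~\ref{def:var of AAB} and the symmetry of $d$ I get $d(p,\Pi_1(\sigma_1[1]))\leq\varepsilon$, hence $(T_1,d),\sigma_1\models\langle\varepsilon\rangle p$. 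The conjunction and disjunction cases are immediate, and the cases $\mathbf{X}\phi_1$ and $\phi_1\mathbf{U}\phi_2$ follow from the induction hypothesis applied to the tails, using that $\sigma_1[k,\infty]\sim_\varepsilon\sigma_2[k,\infty]$ and that $Tr_\varepsilon$ commutes with every connective by Definition~\ref{def:transform}. Combining the two ingredients completes the argument: each $\sigma_1\in Out(q_1,F')$ matches some $\sigma_2\in Out(q_2,F)$, which satisfies $\phi$ by hypothesis, so $(T_1,d),\sigma_1\models Tr_\varepsilon(\phi)$; thus $F'$ witnesses the conclusion.

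I expect the strategy-transfer step to be the main obstacle. The delicate point is that, unlike the systems of Definition~\ref{def:ATS}, the transition relation here is genuinely nondeterministic — a pair of control and disturbance labels need not determine a unique successor — so the \emph{singleton-intersection} bookkeeping of Lemma~\ref{lem:characterization 1} is not available and the matching must be carried out at the level of individual successor states via clause (iii) of Definition~\ref{def:var of AAB}. The second subtlety is that K\"onig's lemma must be applied to the tree of $T_2$-prefixes rather than $T_1$-prefixes; this is precisely why the finiteness hypothesis is imposed on the finite abstraction $T_2$ and not on the infinite sampling system $T_1$.
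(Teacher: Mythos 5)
Your proposal is correct and is essentially the proof the paper intends: the paper itself gives no details, remarking only that Proposition~\ref{pro:control systems' property} ``may be proved analogously'' to Lemma~\ref{lem:characterization 2}, and your two ingredients --- a strategy-transfer construction mirroring Lemma~\ref{lem:characterization 1} (with the roles of the two systems swapped, clause (iii) of Definition~\ref{def:var of AAB} replacing the singleton-intersection bookkeeping, and K\"onig's lemma applied to the tree of $T_2$-prefixes, where finiteness of the abstraction supplies finite branching) and a negation-free structural induction on LTL$_+$ formulas replacing the path-formula clauses of Lemma~\ref{lem:characterization 2} --- carry out exactly that analogy. The adaptations you single out as delicate are indeed the only points where the argument differs from the one in Section~\ref{sec:characterization}, so nothing is missing.
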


Immediately, we have
\begin{corollary}\label{cor:under control}
Let  $\Sigma=(X,U\times V,\mathcal{W},f)$ be a $\delta$-GAS control system with compact input space $U\times V$ and $\varepsilon\in\mathbb{R}_+$.
Then there exists $\tau\in\mathbb{R}_+$ and a finite abstraction $T_f\in T_{\varepsilon,\tau}(\Sigma)$ that is A$\varepsilon$A bisimilar to the sampling system $T_{\tau}(\Sigma)$, and for any $T\in T_{\varepsilon,\tau}(\Sigma)$ and LTL$_+$ formula $\phi$, if there exists a state $q$ and a strategy $F$ of $T$ such that $(T,d),\sigma\models \phi$ for any $\sigma\in Out(q,F)$, then so does $T_{\tau}(\Sigma)$ for $Tr_\varepsilon(\phi)$.
\end{corollary}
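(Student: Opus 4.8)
The plan is to obtain this corollary purely by chaining Theorem~\ref{Th:bisi} with Proposition~\ref{pro:control systems' property}: no new construction is needed, and the only care required is in lining up the hypotheses and fixing the direction of the bisimilarity.

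First I would invoke Theorem~\ref{Th:bisi}. Since $\Sigma$ is $\delta$-GAS and $U\times V$ is compact, for the given precision $\varepsilon\in\mathbb{R}_+$ there exist $\tau\in\mathbb{R}_+$ and a finite abstraction $T_f$ of $\Sigma$ that is A$\varepsilon$A bisimilar to the sampling system $T_\tau(\Sigma)$. This $T_f$ witnesses $T_f\in T_{\varepsilon,\tau}(\Sigma)$, and in particular shows $T_{\varepsilon,\tau}(\Sigma)\neq\emptyset$, which discharges the first half of the statement.

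Next I would fix an arbitrary $T\in T_{\varepsilon,\tau}(\Sigma)$, a state $q$ of $T$, a strategy $F$ of $T$, and an LTL$_+$ formula $\phi$ with $(T,d),\sigma\models\phi$ for every $\sigma\in Out(q,F)$. By the definition of $T_{\varepsilon,\tau}(\Sigma)$ we have $T\simeq_\varepsilon T_\tau(\Sigma)$, so by Definition~\ref{def:var of AAB} there is an A$\varepsilon$A bisimulation relation $R$ between the two systems whose two projections are both onto. Reading $R$ in the direction that puts the infinite sampling system $T_\tau(\Sigma)$ on the left and the finite abstraction $T$ on the right, surjectivity of the projection onto $S_T$ supplies a state $q'$ of $T_\tau(\Sigma)$ with $q'\sim_\varepsilon q$; here I rely on the fact that conditions (i)--(iii) of Definition~\ref{def:var of AAB} are symmetric in the two systems, so that $R^{-1}$ is again an A$\varepsilon$A bisimulation and the roles of $T$ and $T_\tau(\Sigma)$ may be interchanged.

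Finally I would apply Proposition~\ref{pro:control systems' property} with $T_1=T_\tau(\Sigma)$ (infinite, non-blocking by the Pola--Tabuada construction), $T_2=T$ (finite, non-blocking), $q_1=q'$ and $q_2=q$. Since $q'\sim_\varepsilon q$ and $F$ forces $\phi$ on all outcomes from $q$, the proposition yields a strategy $F'$ of $T_\tau(\Sigma)$ with $(T_\tau(\Sigma),d),\sigma\models Tr_\varepsilon(\phi)$ for every $\sigma\in Out(q',F')$, which is exactly the required conclusion for $T_\tau(\Sigma)$. The one genuine point to verify, and the step I expect to be the main obstacle, is that the systems produced by the construction of~\cite{pola2009symbolic} really satisfy the structural hypotheses of Proposition~\ref{pro:control systems' property}, namely that $T_\tau(\Sigma)$ is infinite and non-blocking while each $T\in T_{\varepsilon,\tau}(\Sigma)$ is finite and non-blocking; granting these facts, the corollary is a direct composition of the two cited results.
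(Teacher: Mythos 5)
Your proposal is correct and follows essentially the same route as the paper, whose proof of this corollary is literally the one-line chaining of Theorem~\ref{Th:bisi} with Proposition~\ref{pro:control systems' property}. The extra details you supply (using surjectivity of the projections of the $A\varepsilon A$ bisimulation to find $q'\sim_\varepsilon q$, and checking the infinite/finite and non-blocking hypotheses) are exactly the bookkeeping the paper leaves implicit.
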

\begin{proof}
Follows from Proposition~\ref{pro:control systems' property} and Theorem~\ref{Th:bisi}.
\qed
\end{proof}

Due to the above result, for a control system with disturbances satisfying the conditions mentioned in Corollary~\ref{cor:under control}, given a LTL$_+$ formula $\phi$ as a specification, if its finite abstraction and its sampling system are $A\varepsilon A$ bisimilar and the former satisfies specification $\phi$ under control, then so does the latter for a looser specification $Tr_\varepsilon(\phi)$.
\subsection{Linear temporal logical control}
Recently, finite abstraction and the notion of bisimilarity have attracted some people's notice in the area of analysis and design of control systems~\cite{alur2000discrete}\cite{fainekos2005hybrid}\cite{tabuada2003discrete}\cite{tabuada2006linear}.
In general, control systems and their finite abstractions share properties of interest if they are bisimilar.
In particular, according to modal characterization of bisimilarity, they satisfy same temporal logical properties.
Moreover, the analysis and design of finite abstractions is simpler than that of control systems.
Thus the analysis and design of control systems can be equivalently performed on the corresponding finite abstractions.
\begin{figure}[t]
\begin{center}
\centerline{\includegraphics[scale=0.6]{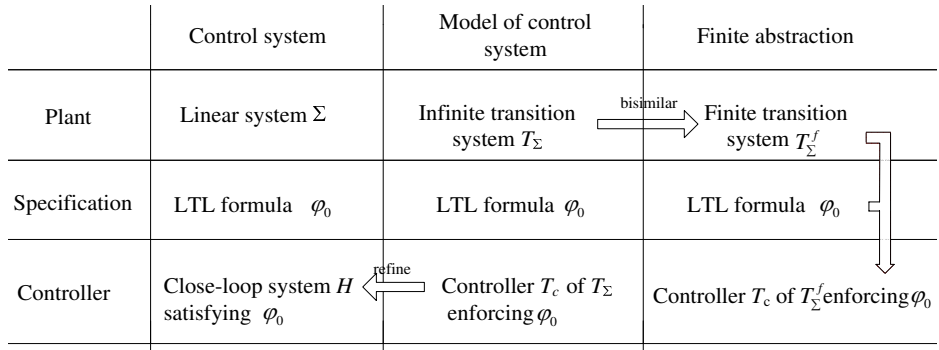}}\vspace{-0.6in}
\end{center}
\hfil\break
\setlength{\abovecaptionskip}{10pt}
\setlength{\belowcaptionskip}{10pt}
\caption{Controller design \cite{tabuada2006linear}: non-disturbance case}\label{fig:illust0}
\end{figure}

As an example, Fig~\ref{fig:illust0} illustrates the function of finite abstraction and bisimilarity in the formal design of linear discrete system \cite{tabuada2006linear}.
Given a linear discrete system $\Sigma$, Tabuada and Pappas provide an infinite transition system $T_{\Sigma}$ as the formal model of $\Sigma$ and construct a finite transition system $T_{\Sigma}^{f}$ as the finite abstraction of $\Sigma$.
They prove the following result which lays the foundation of the design method of controllers presented in \cite{tabuada2006linear}.
\begin{center}
\textit{$T_{\Sigma}$ and $T_{\Sigma}^{f}$ are bisimilar and then share the same properties described by linear temporal logic.}~($*$)
\end{center}
Thus, given an LTL specification $\varphi_0$, the formal design of $T_{\Sigma}$ can be equivalently performed on the finite abstraction $T_{\Sigma}^{f}$.
Tabuada and Pappas construct a controller $T_c$ of $T_{\Sigma}^{f}$ enforcing $\varphi_0$ and demonstrate that $T_{\Sigma}$ satisfies $\varphi_0$ under this controller as well.
Furthermore, based on this controller, a close-loop system $H$ satisfying $\varphi_0$ is generated.
Similar methods are also adopted in ~\cite{fainekos2005hybrid}\cite{kloetzer2008fully}\cite{tabuada2003discrete}.

It is worth to be pointed out that the work~\cite{alur2000discrete}\cite{fainekos2005hybrid}\cite{tabuada2003discrete}\cite{tabuada2006linear} consider only the non-disturbance control systems.
We intend to generalize these methods to the disturbance case.
Similar to the conclusion ($*$) above, as illustrated by Fig~\ref{fig:graph2}\footnote{In this figure, ATS is the abbreviation of alternating transition system.}, Corollary~\ref{cor:under control} in this paper combining with the work in~\cite{pola2009symbolic}  provides analogous  results for linear temporal logical control of control systems with disturbances.
In detail, Pola and Tabuada construct finite abstractions of control systems that are $A\varepsilon A$ bisimilar to the samples of control systems~\cite{pola2009symbolic}, while we demonstrate that if finite abstraction satisfies a specification $\phi$ under control then so does the samples for a looser specification $Tr_\varepsilon(\phi)$ (see Corollary~\ref{cor:under control}).
These results inspire us to provide an approach for the design of control system as shown in Fig~\ref{fig:graph2}:
first, construct finite abstraction that is $A\varepsilon A$ bisimilar to the sample of control system;
second, find a strategy of finite abstraction enforcing the given LTL$_+$ specification $\phi$; and finally, construct controller for control systems based on this strategy so that sampling system satisfies the transformed specification $Tr_\varepsilon(\varphi)$ \footnote{Since we often just can  observe the sampling systems rather than control systems with disturbances, it may be reasonable to require that the samples satisfy specifications under such controllers.}.
The first step has been completed by Pola and Tabuada~\cite{pola2009symbolic}.
For the second step, an algorithm has been offered to find strategies of alternating transition systems enforcing linear temporal logical specifications~\cite{kloetzer2008dealing} and this algorithm can be adopted to obtain the desired strategies for finite abstractions.
So there is only one question left to answer: how to construct the desired controller for control system based on the strategy of finite abstraction.
Our future work will focus on this issue.

\begin{figure}[t]
\begin{center}
\centerline{\includegraphics[scale=0.9]{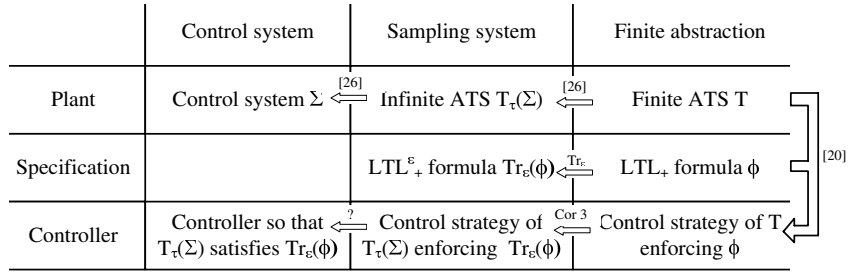}}\vspace{-0.6in}
\end{center}
\hfil\break
\caption{Controller design : disturbance case}\label{fig:graph2}
\end{figure}
\section{Conclusion}
\label{sec:conclusion}

This paper provides a modal characterization of alternating approximate bisimilarity.
Since alternating approximate bisimilarity is not always an equivalence relation, its modal characterization can not be provided in the usual style.
This paper introduces two relations over temporal logic ATL$_{\varepsilon}$ and adopt these relations to establish the desired modal characterization of alternating approximate bisimilarity in a new style (see Theorem~\ref{th:main}).
This result reveals a relationship between the approximate equivalence among alternating transition systems and the temporal logical properties satisfied by these systems.

Pola and Tabuada adopt alternating transition systems to model the samples of control systems with disturbances and their finite abstractions, and introduce the notion of alternating approximate bisimilarity to capture the equivalence between these systems~\cite{pola2008symbolic}\cite{pola2009symbolic}.
Based on the modal characterization of alternating approximate bisimilarity obtained in this paper, we provide the transformation function $Tr_\varepsilon$ from LTL$_+$-specifications to LTL$^\varepsilon_+$-specifications.
 Moreover, we show that, given a control system with disturbances, whose sampling system and finite abstraction are alternating approximate bisimilar, if the later realizes LTL$_+$-specification $\phi$ under control, then the former satisfies the corresponding LTL$^\varepsilon_+$-specification $Tr_\varepsilon(\phi)$ under control.
As illustrated in Fig~\ref{fig:graph2}, this result may be useful in designing the controller for  control systems with disturbances.
Future work will be devoted to perfecting the approach shown in Fig~\ref{fig:graph2}.

\bibliographystyle{spmpsci}
\bibliography{myref}

%
%

\end{document}